\documentclass[journal]{IEEEtran}
\usepackage{cite}

\ifCLASSINFOpdf
\else
\fi
\usepackage{graphicx}
\usepackage{tabularx}
\usepackage{caption}
\usepackage{subcaption}
\usepackage[dvipsnames]{xcolor}
\usepackage{cancel}
\usepackage{algpseudocode}
\usepackage{hyperref}
\usepackage{amsmath}
\usepackage{soul,color}
\usepackage{makecell}
\usepackage{array}
\usepackage{adjustbox}
\usepackage{makecell}
\usepackage{footnote}
\usepackage[table]{xcolor}
\definecolor{myrowgray}{gray}{0.92}
\usepackage{amssymb} 
\usepackage{booktabs}   
\renewcommand{\arraystretch}{1.2} 
\usepackage{amssymb} 
\usepackage{pifont}  
\newcommand{\xmark}{\ding{55}} 
\usepackage{footnote}
\usepackage{threeparttable}
\usepackage{array}
\newcolumntype{Y}{>{\centering\arraybackslash}X} 
\usepackage[table,xcdraw]{xcolor}

\usepackage[ruled,vlined]{algorithm2e}
\newtheorem{theorem}{Theorem}

\newtheorem{proof}{Proof}

\usepackage{textcomp}
\usepackage{subcaption}

\definecolor{myblue}{RGB}{65,105,225}
\definecolor{myred}{RGB}{178,34,34}
\definecolor{mygold}{RGB}{255,215,0}
\definecolor{mygreen}{RGB}{34,139,34}
\usepackage{microtype}

\usepackage{enumitem}
\setlist{leftmargin=*, itemsep=1pt, topsep=2pt, parsep=0pt, partopsep=0pt}

\renewcommand{\arraystretch}{1.08}
\usepackage{balance}

\def\BibTeX{{\rm B\kern-.05em{\sc i\kern-.025em b}\kern-.08em
    T\kern-.1667em\lower.7ex\hbox{E}\kern-.125emX}}

\makeatletter
\renewcommand{\section}{\@startsection{section}{1}{\z@}%
  {1.2ex plus 0.5ex minus 0.2ex}
  {0.7ex plus 0.2ex}
  {\normalfont\scshape\centering}}
\renewcommand{\subsection}{\@startsection{subsection}{2}{\z@}%
  {1.0ex plus 0.3ex minus 0.2ex}
  {0.5ex plus 0.2ex}
  {\normalfont\itshape}}
\makeatother

\begin{document}
%

\title{Split-Aware Function Placement with Availability Guarantees and Optical Provisioning in vRANs}
%
%
%

\author{Mayank Ramnani, Shasank Dixit,
         Sushil Yadav, Saad Ahmed
         and Sidharth Sharma 
\thanks{Mayank Ramnani, Sushil Yadav and Saad Ahmed are with the Department
of Computer Science and Engineering, Indian Institute of Technology Indore, India, 453552 (e-mail: phd2201101004@iiti.ac.in). 
Shasank Dixit and Sidharth Sharma are with the Department
of Computer Science and Engineering, Indian Institute of Technology Jodhpur, India, 342037 (e-mail: p25cs0008@iitj.ac.in; sidharth@iitj.ac.in).}}

%
%

\markboth{Journal of \LaTeX\ Class Files,~Vol.~14, No.~8, August~2015}%
{Shell \MakeLowercase{\textit{et al.}}: Bare Demo of IEEEtran.cls for IEEE Journals}
%

\bstctlcite{BSTcontrol}



\maketitle

\begin{abstract}
The rapid evolution of beyond-5G and emerging 6G networks is driving the need for flexible, reliable, and cost-efficient virtualized Radio Access Network (vRAN) architectures capable of supporting heterogeneous services such as enhanced Mobile Broadband (eMBB), Ultra-Reliable Low-Latency Communication (URLLC), and Massive Machine-Type Communication (mMTC).
Future disaggregated RAN systems are expected to rely heavily on network slicing, functional split flexibility, and optical x-haul infrastructures to support stringent performance, scalability, and availability requirements.
In this paper, we present an integrated framework for reliable, slice-aware, and functional split-aware Virtual Network Function (VNF) placement with lightpath provisioning in disaggregated vRAN environments.
The proposed approach maximizes mobile network operators’ profit by jointly optimizing function placement and optical resource allocation under latency, processing, bandwidth, and availability constraints. 
We formulate the problem as an Integer Linear Programming (ILP) model with two variants: one that employs unshared backups and another that uses a more cost-efficient shared backup scheme. 
To address ILP complexity, we develop a heuristic algorithm and a Genetic Algorithm (GA)-based metaheuristic that yields near-optimal solutions in real time. 
Extensive evaluations on topologies up to 128 nodes show that shared backup variants yield up to 18\% higher profit, while maintaining up to 5–10\% lower normalized CPU usage than unshared counterparts.
\end{abstract}

\begin{IEEEkeywords}
5G, virtualized RAN, VNF placement, network slicing, lightpath provisioning, DWDM, shared backup, functional splitting, genetic algorithm.
\end{IEEEkeywords}

%
\IEEEpeerreviewmaketitle

\section{Introduction}
\label{sec:intro}
%
%
%
%
\IEEEPARstart{F}{or} the cellular network infrastructure, the Radio Access Network (RAN) is a crucial component that provides the interface between User Equipment (UE) and the core network. 
However, traditional, tightly integrated RAN architectures cannot deliver the flexibility, scalability, and efficiency needed to keep up with the rapid rise in mobile traffic and the wide variety of applications, from immersive XR to mission-critical IoT.
Therefore, this traditional architecture has evolved over the years towards greater programmability and flexibility. 

One major step in this evolution was the introduction of Centralized RAN (C-RAN), where baseband functions are pooled centrally, while the radios remain at the cell sites. This separation enabled more efficient resource sharing and better coordination of radio functions. 
Building on this, 3GPP’s Next-Generation RAN (NG-RAN) defined \textit{functional splitting} that takes RAN design a step further by enabling the radio protocol stack to be divided across three logical entities: Radio Units (RUs), Distributed Units (DUs), and Central Units (CUs).
The NG-RAN standard specifies eight possible split options \cite{larsen2018survey}, each describing a different way of distributing functions among these three entities, with distinct latency, bandwidth, and processing requirements. 
By enabling flexible combinations of centralization and distribution, functional splitting offers greater adaptability than C-RAN and is better suited to meet the diverse performance demands of next-generation services. 
However, selecting the optimal split is challenging; it depends on a variety of factors, including traffic demand, network topology, and the specific QoS requirements of diverse services \cite{harutyunyan2018flex5g,morais2022placeran}.

\par The amalgamation of Network Function Virtualization (NFV) into the RAN has given rise to the virtualized RAN (vRAN) paradigm \cite{morgado2018survey}, where CUs, DUs, and RUs operate as software instances on Commercial-Off-The-Shelf (COTS) hardware.  vRAN offers flexibility in deploying radio functions and managing resources, enabling more efficient and adaptable cellular networks. Supported by industry efforts such as the O-RAN Alliance, it also paves the way for cloud-native deployments and dynamic scaling. At the same time, it brings new operational challenges. Placement of virtualized radio functions becomes more complex because each functional split comes with its own transport and computing requirements. As a result, moving from a fully centralized RAN (C-RAN) to a distributed vRAN requires a careful balance between maintaining high performance and resource efficiency \cite{sehier2019transport} \cite{gstr2018transport}.

\par In 5G NG-RAN deployments, service diversity is central. 
Ultra-Reliable Low Latency Communication (URLLC) services require sub-1ms latency and five-nines availability for use cases such as telesurgery and cooperative driving. 
Enhanced Mobile Broadband (eMBB) services require multi-Gbps throughput for 8K/VR streaming, while Massive Machine-Type Communication (mMTC) services prioritize device density and energy efficiency \cite{8723481}. 
Network slicing \cite{zhang2019overview} enables operators to instantiate multiple logically isolated RAN instances over a shared physical infrastructure, each tailored to a specific service profile. 
vRAN’s software-defined nature makes such dynamic slice instantiation feasible \cite{ordonez2017network}, but also raises the bar for slice-aware functional placement. 
This is because the system must map each RU to a CU/DU configuration that meets slice-specific QoS while respecting compute and transport constraints.

\par Availability and reliability considerations for the slices further complicate this task. Telecom-grade hardware traditionally delivers “five-nines” availability \cite {9126778, 8847037}. However, COTS-based vRAN is not that robust. Failures in compute nodes, transport links, or optical equipment can breach slice SLAs, especially in URLLC slices. 

\par Compounding the challenge is the optical transport network that interconnects RUs, DUs, and CUs. 
In the widely used DWDM-based optical transport, wavelength assignment for lightpaths must meet per-slice latency and availability targets within tight spectral budgets.
Standards from ITU-T (e.g., G.709) and the Optical Internetworking Forum (OIF) emphasize coordinated radio-optical orchestration \cite{zhang2017demonstration}. 
However, most vRAN placement research models the transport as static and unconstrained, leading to suboptimal or infeasible end-to-end solutions \cite{8993833}, \cite{8722598}.

Prior works on vRAN function placement  \cite{garcia2018wizhaul, morais2022placeran ,9733024, sen2025slice,ahsan2023efficient,begic2024framework,li2024reliability} 
address either functional-split-aware optimization or transport resource constraints, or optical lightpath protection, but not joint optimization of (1) functional split selection, (2) slice-aware placement, (3) availability-aware redundancy, and (4) optical lightpath provisioning.
This gap is significant as in real NG-RAN deployments, these factors are interdependent and must be co-optimized to meet 5G/6G performance expectations.  We propose a comprehensive framework in this paper that integrates split-aware function placement, availability, slice-type requirements, and lightpath provisioning. Our key contributions are as follows:


\begin{enumerate}
\item \textbf{Joint ILP formulation.} We design an ILP that jointly: (i) selects the Virtual Network Configuration or VNC (functional split and RU/DU/CU mapping), (ii) places functions to meet slice QoS and availability, and (iii) provisions primary and backup lightpaths with wavelength assignment under bandwidth and capacity limits. Unlike prior work, split selection, placement, and DWDM provisioning are co-optimized in a single model.

\item \textbf{Availability models with protection choices.} We design two protection regimes: \emph{Dedicated Protection} (unshared backups per slice request) and \emph{Shared Protection} (backup resources shared across requests). The latter reduces redundancy and cost while satisfying per-slice availability constraints.

\item \textbf{Scalable solution methods.} To address ILP scalability, we (i) propose a greedy heuristic that, for each RU, selects an appropriate VNC and allocates suitable primary and backup path(s) with wavelength channels, and (ii) develop a Genetic Algorithm (GA) based metaheuristic that refines placements to balance runtime and solution quality.

\item \textbf{Reference system model with optical x-haul.} We adopt a flexible x-haul transport network where any non-RU node can act as a DU or CU. The transport is optical; paths from RU to CU are provisioned as lightpaths. 
The optimization operates directly on this architecture.
\end{enumerate}

The remainder of the paper is organized as follows. Section II surveys related work on 5G vRAN function placement, lightpath provisioning, and network slicing. Section III presents the system model and problem formulation. Section IV provides the ILP formulation. Section V introduces heuristic and GA-based approaches. Section VI describes the experimental setup and reports the results. Section VII concludes the paper and outlines directions for future research.

\section{Related Work}
The problem of optimal functional placement in vRAN has been studied widely. Early efforts such as \cite{garcia2018wizhaul} and \cite{9389648} aim to increase centralization by reducing the computational cost of baseband processing. 
Other works, including \cite{sarikaya2021placement} and \cite{alabbasi2018optimal}, assign a centralization metric to the functional splits and select the split to optimize the overall level of centralization.

PlaceRAN \cite{morais2022placeran} formulates the placement problem as an Integer Linear Program that selects functional splits and places functions across CU and DU to maximize aggregation while minimizing computing resources under crosshaul and compute constraints. However, it does not model different slice types and their specific requirements. In particular, it does not capture availability, which is essential for URLLC, where downtime and added latency cannot be tolerated. It also does not include optical transport in its framework.
\begin{table*}[ht]
\centering
\caption{Comparison of the key features in the related work.}
\label{table 1}
\begin{threeparttable}
\begin{tabular}{lcccc}
\toprule
\textbf{Reference} & \textbf{Slice-awareness} & \textbf{Backup/Redundancy} & \textbf{Lightpath provisioning (Optical)} & \textbf{Split-aware function placement} \\
\midrule
Morais et al. \cite{morais2022placeran}          & \xmark      & \xmark                 & \xmark      & \checkmark \\
Da Silva et al. \cite{da2021function}            & \checkmark  & \xmark                 & \xmark      & \checkmark \\
Sen et al. \cite{sen2023slice}                   & \checkmark  & \xmark                 & \xmark      & \checkmark \\
Yusupov et al. \cite{yusupov2018multi}           & \checkmark  & \xmark                 & \xmark      & \checkmark \\
Yu et al. \cite{yu2020isolation}                 & \xmark      & \xmark                 & $\sim$\tnote{1} & $\sim$\tnote{2} \\
Xiao et al. \cite{xiao2020can}                   & \xmark      & \xmark                 & \checkmark  & \checkmark \\
Zorello et al. \cite{zorello2022power}           & \xmark      & \xmark                 & \xmark      & $\sim$\tnote{3} \\
Klinkowski et al. \cite{klinkowski2023optimized} & \xmark      & \xmark                 & \xmark      & \xmark \\
Marotta et al. \cite{9733024}       & $\sim$\tnote{4} & $\sim$\tnote{5}     & \checkmark  & \xmark \\
Sen et al. \cite{sen2025slice}                   & \checkmark  & \xmark                 & \xmark      & \checkmark \\
Li et al. \cite{li2024reliability}               & \xmark      & \checkmark             & \xmark      & \xmark \\
Begi\'c et al. \cite{begic2024framework}        & \checkmark  & \checkmark\tnote{6}    & \xmark      & \xmark\tnote{7} \\
Ahsan et al. \cite{ahsan2023efficient}             & \checkmark  & \xmark                 & \checkmark  & $\sim$\tnote{8} \\
\rowcolor{myrowgray}
\textbf{This work}                                & \checkmark  & \checkmark             & \checkmark  & \checkmark \\
\bottomrule
\end{tabular}
\begin{tablenotes}
\footnotesize
\item[1] Considers wavelength allocation but not full DWDM-aware lightpath provisioning.
\item[2] Acknowledges flexible RU/DU/CU splits and uses a fixed split in modeling; does not optimize split selection per slice/base station.
\item[3] Accounts for functional-split–specific latency/capacity but does not optimize split selection (DU/CU placement only).
\item[4] Slice isolation partially considered; not full per-slice logic.
\item[5] Backup applies only to lightpaths; node-level redundancy not addressed.
\item[6] Path redundancy (via 2-edge-connected subgraphs) handled; node redundancy is not.
\item[7] Generic slice/VNF embedding with path protection (Steiner trees); no RU/DU/CU split selection.
\item[8] Adopts 3GPP Option 7/7.2 with corresponding delay/bit-rate, but does not optimize split selection (policy-based processing across fog/BBU).
\end{tablenotes}
\end{threeparttable}
\end{table*}

Several studies consider slice awareness in functional splitting. Examples include \cite{da2021function}, \cite{sen2023slice}, and \cite{yusupov2018multi}. These works improve placement decisions by accounting for slice characteristics, but they do not include availability. In most cases, they also omit optical layer awareness and do not couple slice-aware decisions with joint function placement across CU/DU.

A complementary line of work examines resource and transport aspects. Yu et al. \cite{yu2020isolation} minimize the number of activated nodes and the number of allocated wavelengths using a node ranking heuristic with a focus on service function isolation. Xiao et al. \cite{xiao2020can} study a fine-grained split architecture and present a mixed integer linear program to optimize processing and data rate consumption. Zorello et al. \cite{zorello2022power} combine delay and compute constraints with split choices to reduce power consumption in RANs. Klinkowski et al. \cite{klinkowski2023optimized} consider flow routing and CU to DU planning in a vRAN over an xhaul packet switched network. They propose two heuristics and an MILP model to improve energy and resource use. These works do not include network slicing and availability. 
Marotta et al. \cite{9733024} address reliable slicing with dedicated lightpath protection for different degrees of VNF and network isolation in optical metro-aggregation networks. They formulate an MILP with a two-phase heuristic to jointly optimize VNF placement and working/backup lightpath provisioning. 
However, the functional split is fixed; per-haul latency constraints for diverse slice requirements are not considered, and protection is limited to link failures at the lightpath layer without node-level compute redundancy.
More recent work continues this trend. Sen et al. \cite{sen2025slice} propose an MILP that performs functional placement and processing based on each slice service requirement. However, the work does not include built-in redundancy or optical network considerations. Li et al. \cite{li2024reliability} focus on reliability through backup link provisioning in function chains. They use an MILP with explicit backup link allocation and present a column generation-based heuristic for larger networks. Their optimization is at the service function chain level rather than per slice, and it does not consider lightpath provisioning or optical constraints. 
Begic et al. \cite{begic2024framework} introduce a slice-level framework with survivability using 2-edge-connected subgraphs, formulating a revenue-maximizing ILP under latency and capacity constraints. Their approach provides link-failure survivability but does not address node redundancy, optical transport constraints, or functional split selection.
Ahsan et al. \cite{ahsan2023efficient} present an integrated framework for grooming, routing, and wavelength assignment in a three-layer WDM-enabled CF RAN. They consider slice awareness and transport layer optimization, but they do not add redundancy mechanisms for reliability.

These studies are promising. Most works, however, focus on one aspect such as slice awareness, reliability, or resource efficiency, or at most a pair of these aspects. To the best of our knowledge, no fully integrated approach exists (see Table \ref{table 1}). In many studies, transport constraints are often simplified, even though they are important for availability-aware RAN slicing over optical networks. As a result, availability-aware slicing, function placement, and optical routing have largely been treated in isolation, and many vRAN placement studies emphasize compute and latency while overlooking optical transport limits. Our work addresses this gap with an integrated framework that jointly considers availability, slice awareness, function placement, and optical transport. 


The framework proposed in this paper applies to multi-vendor, disaggregated NG-RAN deployments where operators must meet slice-level service agreements while using optical and compute resources efficiently. By embedding functional split awareness, slice-specific availability, and transport provisioning within one decision framework, it targets a practical need for operators deploying large 5G systems and preparing for 6G.
The standards landscape reinforces this need. The work in 3GPP Release 18 and Release 19 addresses RAN slicing and resiliency \cite{lin2025bridge}, while O-RAN Working Group 6 focuses on transport integration \cite{agarwal2025open}. 
ETSI NFV guidance on reliability complements these efforts. Collectively, these standards define how the radio and transport domains should work together and set clear expectations for slice-level service agreements. 

\begin{figure}
	\centering
	\includegraphics[width=9.5 cm, height=8 cm]{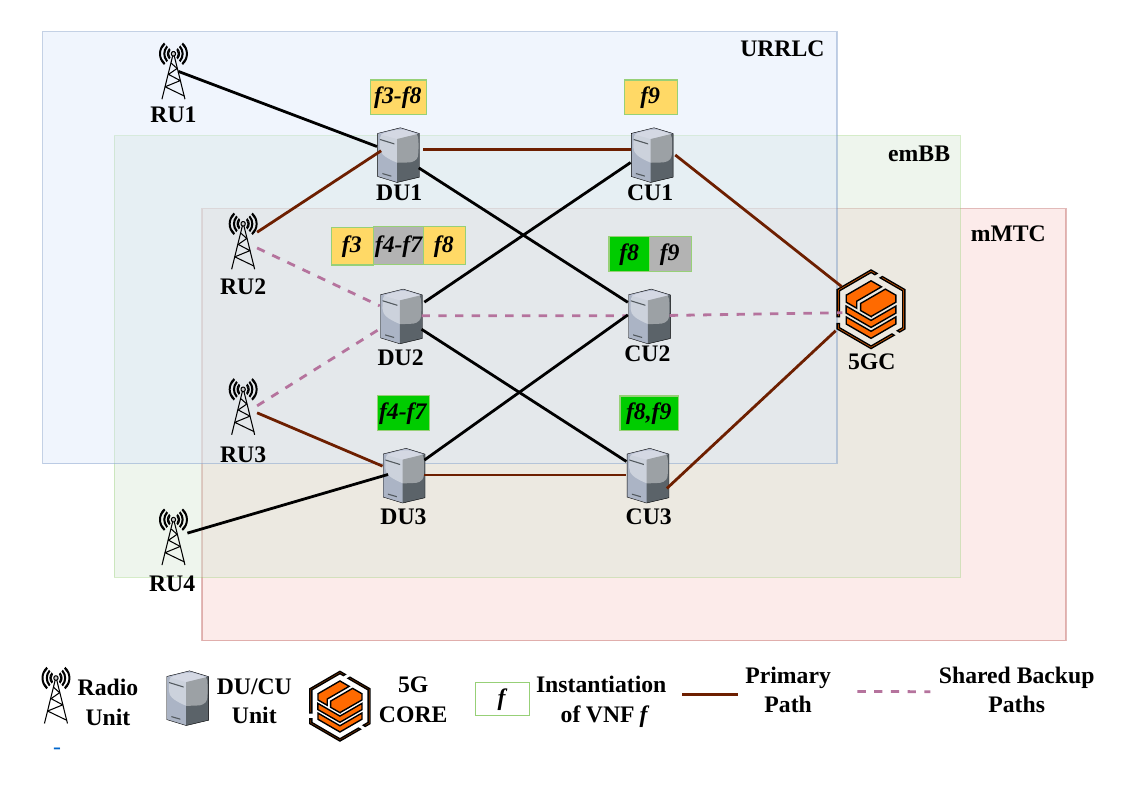}
	 \caption{Reference vRAN architecture and an illustrative example of the shared backup scheme.}
	\label{FIG:2}
 \vspace*{-5pt}
\end{figure}
\begin{figure*}[t]
	\centering
	\includegraphics[width=18 cm, height=10cm]{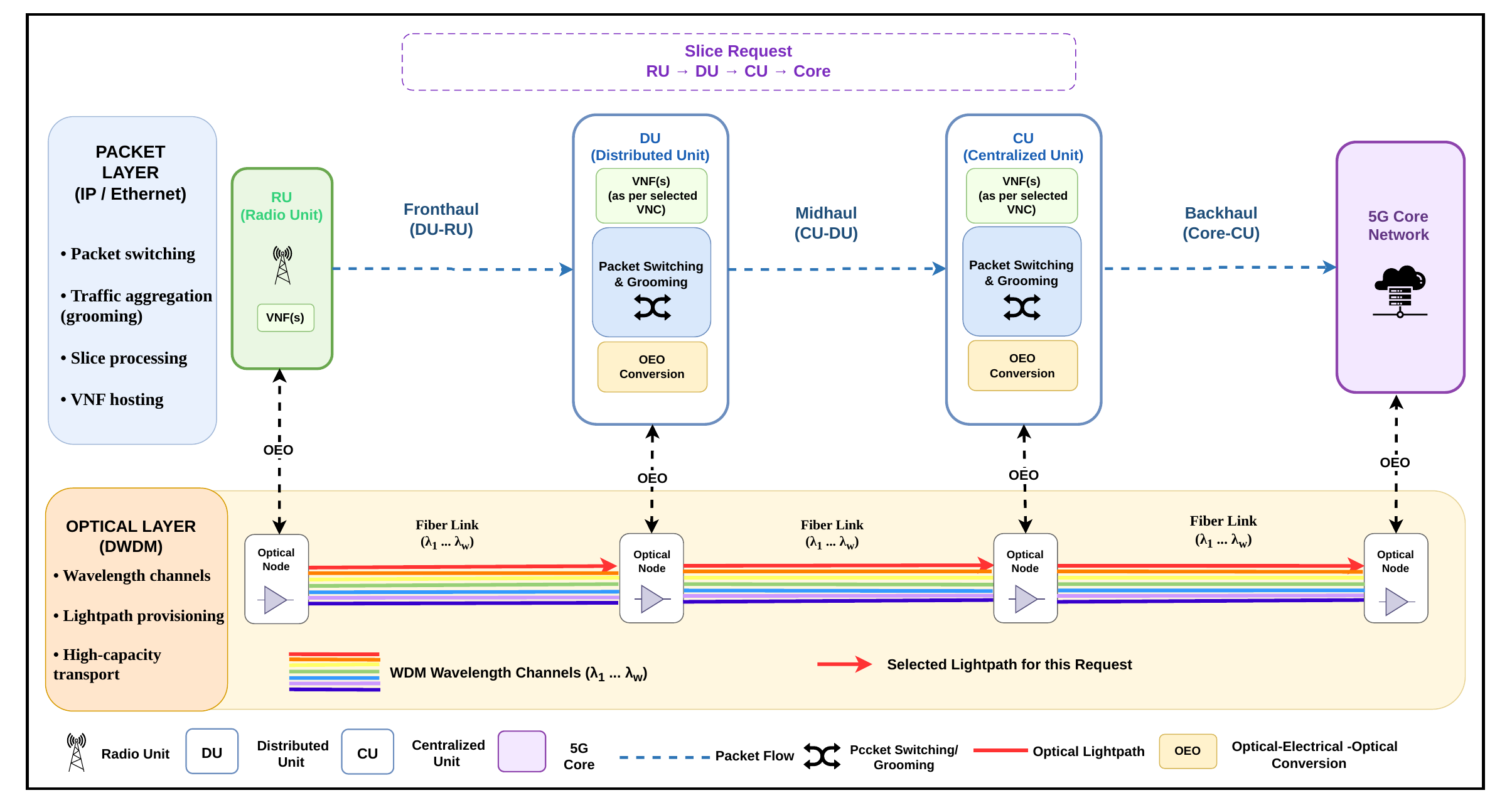}
	 \caption{Two-layer packet-over-optical transport architecture, comprising a DWDM optical layer providing wavelength-based lightpath connectivity and a packet layer at DU/CU nodes performing OEO conversion, switching, and grooming to carry aggregated slice traffic from RU to the 5G core.}
	\label{FIG:transport_network_packet_over_optical}
 \vspace*{-10pt}
\end{figure*}

An earlier version of this work appeared as SliAvailRAN \cite{ahmed2024sliavailran}. SliAvailRAN presented an ILP for vRAN that placed functions across CU, DU, and RU and admitted slice requests under availability, delay, and bandwidth limits. This paper extends SliAvailRAN in three main ways. First, we add optical transport to the model: we plan end-to-end lightpaths, choose primary and backup routes, and assign wavelengths while meeting latency and capacity budgets from RU to core. Second, we provide a formal analysis of the problem, including computational hardness and structural properties that guide algorithm design. Third, we introduce scalable solvers, a greedy heuristic, and a genetic algorithm, which allow us to handle larger networks. These additions align the framework with slice-level SLAs and the realities of NG-RAN optical transport. Also, this article augments the conference version with more extensive results and a broader review of related work.

\section{System Model}

\subsection{Preliminaries}

\subsubsection{Reference Architecture}
Fig. \ref{FIG:2} illustrates the reference vRAN architecture considered in this work. 
The vRAN supports three network slice types: URLLC, eMBB, and mMTC.
 User equipment (UEs) connect to the nearest RU, which is also connected to computing nodes (servers) that can operate as a DU, a CU, or both simultaneously for different slice requests, following an \textit{x-haul} architecture \cite{de2015xhaul,3gpp-38.401}. 
DU/CU nodes, in turn, connect to the 5G Core through a mesh topology. 

We assume a two-layer transport architecture as shown in Fig. \ref{FIG:transport_network_packet_over_optical}  consisting of an optical DWDM layer and a packet processing layer. 
The optical layer provides wavelength-based lightpath connectivity between RUs, DU/CU nodes, and the 5G core over fibre-based transport links. 
We assume ROADM-free CU/DU sites, where each CU/DU adds or drops an end-to-end wavelength using passive DWDM mux/demux components, without wavelength conversion in transit. Reconfigurable add/drop functionality and restoration mechanisms are confined to the 5G core.

At the DU/CU nodes, an L2 packet-processing fabric (e.g., compact Ethernet switch or server NIC/DPU) terminates optical signals through OEO conversion and performs packet-level operations such as routing, per-slice QoS enforcement, packet switching, and traffic grooming. 
Consequently, multiple slice flows may be multiplexed onto the same wavelength channel before transmission. 
Accordingly, for each slice request, the path from the serving RU to the core node is provisioned as an optical lightpath, while packet-level forwarding and aggregation are handled at the DU/CU processing nodes. 
This design reflects practical cost- and latency-efficient x-haul deployments, where DU/CU sites primarily perform processing and traffic grooming while the optical layer remains transparent with simplified 1+1 or 1:1 protection provisioning.

\subsubsection{Virtual Network Configuration}
\label{section:vnc}

3GPP NG-RAN defines several functional split options that specify how the radio protocol stack (PHY, MAC, RLC, PDCP, RRC) is divided among the CU, DU, and RU \cite{3gpp-38.801}. These splits allow operators to trade off latency, bandwidth, and pooling efficiency by deciding whether functions should remain close to the RU or be shifted upward into the DU or CU. 
Modern NG-RAN and vRAN architectures enable flexible deployment organizations with different degrees of functional centralization and distributed processing depending on deployment scenario, virtualization capability, fronthaul characteristics, and service requirements. 
Although current commercial O-RAN deployments predominantly adopt Split 7.2x between RU and DU, together with Split 2 between DU and CU, different deployment scenarios may favor different functional split organizations.
In particular, Split 7.x is better suited for dense urban and high-capacity deployments, whereas Split 6/Small Cell Forum (SCF)-style and integrated D-RAN-like architectures are often preferred in transport-constrained, small-cell, and edge-centric deployment scenarios \cite{scf_splits_blog_2023}.

To capture the deployment-level heterogeneity enabled by modern NG-RAN and vRAN architectures, it is necessary to model different feasible functional deployment organizations that may coexist across infrastructure environments and service scenarios. 
Accordingly, in our earlier work \cite{ahmed2024sliavailran}, we introduced the concept of Virtual Network Configuration (VNC), defined as a specific mapping of radio VNFs onto a given arrangement of CUs, DUs, and RUs derived from standardized functional split definitions. 
In this study, we consider nine practically relevant VNCs representing different degrees of functional centralization and distributed processing, inspired by deployment organizations such as D-RAN, NG-RAN-II, NG-RAN-III, and C-RAN \cite{3gpp_ts23501_2018,gstr2018transport,3gpp-38.801,scf_splits_blog_2023}. 
The selected VNCs are restricted to practically adopted split organizations involving higher-layer options such as Split  1/2 (O1/O2) and lower-layer options such as Split 6/7 (O6/O7), and are chosen based on industry standards and practical deployment insights to ensure real-world applicability \cite{morais2022placeran}.

NG-RAN-III, consisting of VNCs 9, 8, 7, and 6, distributes functions across three independent nodes (CU, DU, and RU) and corresponds to 3GPP split options most widely deployed in practice. 
C-RAN, represented by VNCs 5 and 4, collapses the CU and DU into a single node, leaving the RU as a simple radio head and thereby concentrating most functions centrally. 
NG-RAN-II, which includes VNCs 3 and 2, collapses the DU and RU into one node while keeping the CU separate. Finally, VNC1 represents the traditional D-RAN, in which CU, DU, and RU are fully co-located at a single site.

In our framework, we explicitly order the VNCs such that a higher VNC number corresponds to a higher priority. Thus, VNC 1 has the lowest priority and VNC 9 the highest. This ordering does not simply follow the degree of centralization, but rather reflects practical deployment considerations. 
NG-RAN-III configurations (VNCs 6–9) are assigned the highest priority because they represent the most practically adopted disaggregated deployment organizations in modern NG-RAN, O-RAN, and SCF systems, offering an effective balance between functional centralization, transport feasibility, and resource pooling gains. 
 
C-RAN configurations (VNCs 4–5) are ranked below NG-RAN-III because highly centralized deployments impose stringent fronthaul latency and bandwidth requirements, which significantly constrain transport feasibility \cite{gstr2018transport}. 
NG-RAN-II configurations (VNCs 2–3) represent partially integrated DU/RU deployments that reduce transport demands at the expense of reduced pooling flexibility. Finally, D-RAN (VNC 1) is assigned the lowest priority because fully distributed deployments provide minimal centralization benefit and limited resource aggregation capability \cite{gstr2018transport}.

Within the NG-RAN-III, C-RAN, and NG-RAN-II VNC groups, O7-based configurations are prioritized over O6 because O7 enables stronger PHY-layer centralization, improved sharing of processing resources, and better alignment with current O-RAN deployment practices. 
Similarly, configurations based on O2 are prioritized over O1 because O2 enables greater centralization and resource pooling while still maintaining practical transport requirements. 
This priority structure is therefore not arbitrary, but reflects deployment tendencies and transport feasibility considerations, where the key trade-off lies between centralization, transport requirements, and deployment feasibility. 


\begin{figure}[t]
	\centering
	\includegraphics[width=9 cm, height=6.5 cm]{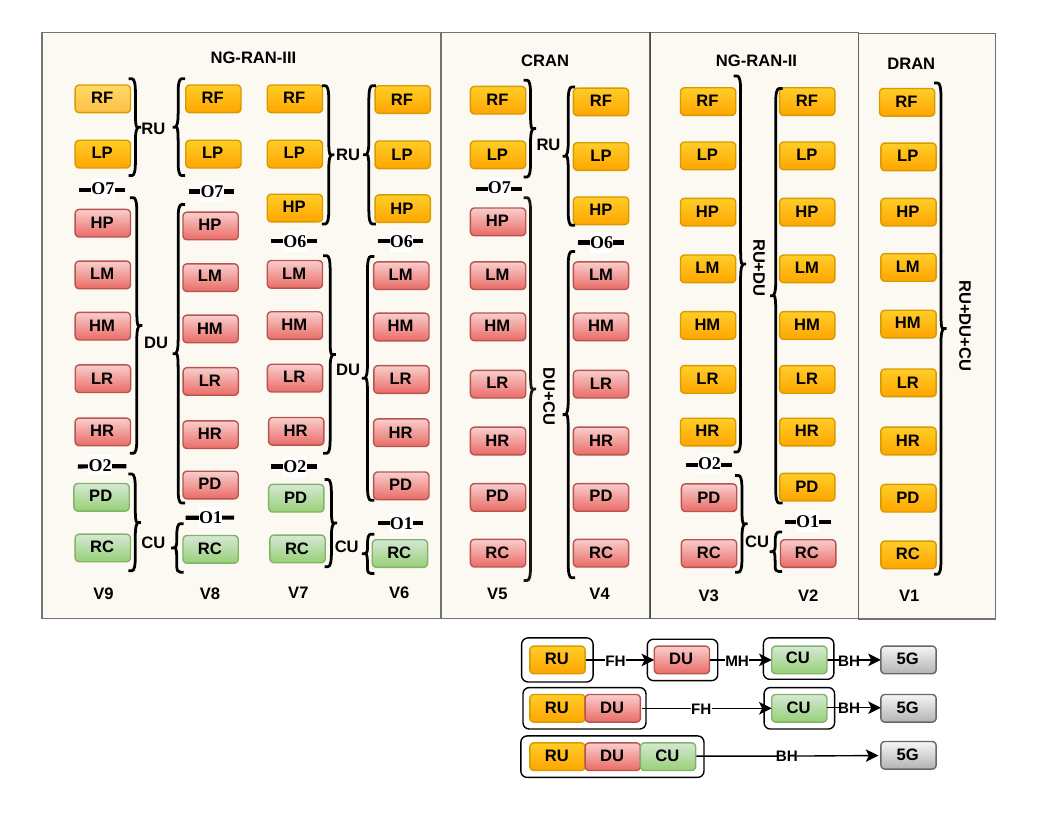}
	 \caption{Different configurations for placements of radio network functions on RU, DU, and CU nodes, defined by VNCs.}
	\label{FIG:vnc}
 \vspace*{-5pt}
\end{figure}




\subsection{Key Design Considerations for High-Availability vRAN}

\subsubsection{CU/DU Disjoint Backup Paths}
\label{availability}

Each slice request, once mapped to the vRAN, is associated with a unique VNC.
We assume that the availability requirement for each slice request is given. 
To satisfy that, we need to place VNFs of the associated VNC in redundant CU/DU nodes on the path traversed by the slice request. 
Also, we would need backup paths such that redundant CUs and DUs are node-disjoint (parallel formation). This is because if any CU/DU node in a path fails, the other redundant node disjoint path takes over. 
Two paths are considered to be CU/DU disjoint if no common CU/DU node exists in both paths. 
Consider an example vRAN shown in Fig. \ref{FIG:2} with a core node (5GC) and multiple CU/DU nodes. 
In this example, paths $RU2-DU1-CU1-5GC$ and $RU2-DU2-CU1-5GC$ are a pair of DU disjoint paths. While paths $RU2-DU1-CU1-5GC$ and $RU2-DU1-CU2-5GC$ is a pair of CU disjoint paths. 
Whereas, $RU2-DU1-CU1-5GC$ (in brown) and $RU2-DU2-CU2-5GC$ (dashed line) are a pair of both DU and CU disjoint paths. 
For a given slice request, its primary and backup path(s) must be both DU and CU disjoint.

\par We can calculate the required number of CU and DU disjoint paths to achieve an availability target for a slice request using the following expression. \\
\begin{equation} \label{eq:avail-formula}
   \tau = \left\lceil\frac{log(1-\sqrt{\Gamma_{s}})}{log(1-\chi_{n})}\right\rceil, 
\end{equation}

where $\Gamma_{s}$ denotes the availability requirement of slice request $s$ at an RU $r$, $\chi_{n}$ is the availability of a node $n$ (where CU/DU is instantiated), and $\tau$ is the number of CU/DU disjoint paths required to satisfy the availability requirement of the slice request (including the primary path). 
For example, assuming that the node availability ($\chi_{n}$) is 0.999, and the availability requirement ($\Gamma_{s}$) of slice request $s$ is 0.999, then from eq. \ref{eq:avail-formula}, we get $\tau = 2$.  
This denotes that we need two CU/DU disjoint paths to satisfy the availability requirement.

\subsubsection{Shared Backup} \label{subsec:shared} 

The shared backup scheme allows multiple slice requests to reuse the same backup node, thereby optimizing network resources. In this approach, if two or more slice requests are allocated backup paths that share a backup node, and their selected VNCs place the same set of VNFs at the same level (DU or CU), then only a single instantiation of the backup VNF is required. This shared instance serves all such requests, avoiding redundant instantiations and improving efficiency.

Consider Fig.~\ref{FIG:2} to better understand this concept. Assume a slice request A originates at RU2 and a slice request B originates at RU3. The primary lightpaths selected are $RU2–DU1–CU1–5GC$ for request A and $RU3–DU3–CU3–5GC$ for request B (solid brown lines). The corresponding backup paths are $RU2–DU2–CU2–5GC$ for A and $RU3–DU2–CU2–5GC$ for B (dashed lines), thereby sharing DU2 and CU2 as common backup nodes.

Assume request A employs VNC $8$, where functions $f3$–$f8$ are placed at the DU and function $f9$ at the CU (shown in yellow), while request B employs VNC $7$, where functions $f4$–$f7$ are placed at the DU and functions $f8$–$f9$ at the CU (shown in green). 
Since functions $f4$–$f7$ overlap at the DU level and function $f9$ overlaps at the CU level, these VNFs need only be instantiated once at the shared backup nodes DU2 and CU2. 
The shared VNF instances (shown in grey) can simultaneously serve both slice requests. This reduces redundant backup instantiations and lowers the overall compute resource consumption, which is the key benefit of the shared backup scheme.

\section{ILP Formulation}

We model the vRAN as a graph $G(N, E)$, where a subset of nodes represent RUs, CU/DU and a single core node (CN) represents the 5G core. Each RU may connect to the CN through multiple possible paths, as illustrated in Fig. 2. Slice requests originate at an RU and must be mapped onto one of these RU–CN paths. If a request is admitted, a VNC is selected and the required VNFs are instantiated at the appropriate CU, DU, or RU nodes, incurring costs for the mobile network operators (MNOs). At the same time, accepting slice requests generates revenue depending on the slice type.
In this section, we formulate an Integer Linear Programming (ILP) model to maximize the MNO’s total profit by balancing placement costs and slice revenues. We define this optimization problem as \textit{Slicing-Aware vRAN Functional Placement with Lightpath Provisioning (SA-vFPLP)}.
Tables \ref{tab:notations}, \ref{tab:binary_input} and \ref{tab:binary_decision} define the notations, binary input parameters and binary decision variables of our ILP formulation, respectively.
\begin{table}[h]
\centering
\caption{Notations}
\begin{tabular}{|c|l|}
\hline
\textbf{Notation} & \textbf{Description} \\ \hline
$G(N,L)$          & Graph with $N$ nodes and $L$ vertices                     \\ \hline
$N$              & Set of processing nodes                               \\ \hline
$L$               & Set of links $l \in L$                               \\ \hline
$\Upsilon_n$          & Processing capacity of node $n \in N$                    \\ \hline
$\xi_{l}$ & Bandwidth capacity of link $l$ \\ \hline
$\chi_{n}$           & Availability of node $n \in N$                           \\  
\hline
$\Gamma_{s}$& Availability requirement of a slice $s$ \\ \hline
$\tau$ &  Number of node disjoint (CU/DU) paths
needed to satisfy \\ & the availability requirement $\Gamma_{s}$\\ \hline
$R$               & Set of RU nodes, $R \subset N$ and $r \in R$
\\ \hline
$W$ & Set of wavelengths ($\lambda \in W$).\\ \hline
$S$               & Set of slices, $s \in S$, $S=\{ S_{0}, S_{1}, S_{2}\}$ \\ &  where $S_{0}=$ URLLC, $S_{1}=$ mMTC and $S_{2}=$ eMBB                                                                                                           \\ \hline
$P_{r}$       & Set of paths originating from RU $r$ to CN                                                     \\ 
\hline
$P$       & Set of all paths, such that: $\sum_{r \in R} P_r = P$.                                                     \\ 
\hline 
$H$                & Set of types of path segments, where $h \in H$ \text{ and }\\ & $H=\text{\{0:Backhaul,  1:Midhaul, 2:Fronthaul\}}$                                                                               \\ \hline
$L_{h}^p$   &    Delay incurred at haul $h$ on path $p$.                                \\ \hline
$V$                & Set of VNCs, where $v \in V$ and $V = \{1,2,3,4,5,6,7,8, 9\}$                                                                                \\ \hline
$F$                & Set of all radio network functions and $f \in F$, $F$ =  ($f1 \rightarrow$ \\ & RF,  $f2 \rightarrow$ LP, $f3 \rightarrow$ HP  $f4 \rightarrow$ LM $f5 \rightarrow$ HM, $f6 \rightarrow$ LR, \\ & $f7 \rightarrow $ HR,  $f8 \rightarrow$ PD, $f9 \rightarrow$ RC)    \\  \hline 
$U$                 & Set of node types (levels).  $U=  \{0:CU, 1: DU, 2:RU\}$ \\ & and $u \in U$ \\ \hline
$CPU_{u}^{v}$      & Processing requirement of VNC $v$ \text{at level} $u \in \text\{0, 1\}$\\
\hline
$\sigma_{h}^{vs}$       & Latency requirement for a given (valid) combination of \\ & VNC $v$  and slice type $s$ at haul $h$ \\ \hline
$b^{vs}_h$ & Bandwidth requirement for a given (valid) combination of \\ & VNC $v$  and slice type $s$ at haul $h$ \\ \hline
$C_{act}$ & Cost of activating a single node\\ \hline
$C_{\lambda}^{l}$ & Cost of activating wavelength \( \lambda \) on link \( l \).\\ \hline
$Cost_{f}^{u}$ & Cost of instantiation of function $f$ at level $u$\\ \hline
$Rev^{s}$ & Revenue from accepting a request of slice type $s\in S$\\ \hline
$B_\lambda$ & Capacity of wavelength $\lambda$ \\ \hline

\end{tabular}
\label{tab:notations}
\end{table}
\begin{table}[h]
\centering
\caption{Binary Input Parameters}
\begin{tabular}{|c|l|}
\hline
\textbf{Variable} & \textbf{Meaning} \\ \hline
    $y^s_r$ & 1, if a slice of type $s$ is assigned at an RU $r$, \\ & such that $\sum_{s \in S} y^{s}_{r} = 1 \hspace{0.1 cm} \forall r \in R$\\
    \hline
    $\psi_{n}^{pu}$ &  1, If node $n$ belongs to path $p \in P_{r}^{k}$ for level $u$.\\
      \hline
      $x_{l}^{ph}$ &  1, If the given link $l$ is a part of a haul $h$ of path $p$. \\
      \hline
      $A_{n}$ &  1, If at least one function $f$ is deployed on the node $n$.\\
      \hline
      $W^{v}_{uf}$ & 1, if function $f$ is a part of VNC $v$ at level $u \in U$. \\
      \hline
       $\eta^v_u$ & 1, if at least one function is to be hosted at level \\ & $u \in U$ according to VNC $v$. \\
       \hline
\end{tabular}
\label{tab:binary_input}
\end{table}
\begin{table}[h]
\centering
\caption{Binary Decision Variables}
\begin{tabular}{|c|l|}
\hline
\textbf{Variable} & \textbf{Meaning} \\ \hline
$\alpha_{r}^{p}$ & 1 if path $p$  is selected as the primary path to service \\ & a slice at RU $r$. \\ \hline
$\beta^{p}_{r}$ & 1 if path $p$  is selected as a backup path to service \\ & a slice at RU $r$.  \\ \hline
$\mu_p^{\lambda}$ & 1, if a path $p$ is assigned a wavelength $\lambda$ \\ \hline
$Z^{v}_{rs}$ & 1, if VNC $v$ has been allocated to an RU $r$. \\ \hline
$ \gamma^{nu}_f$ & 1, if function $f$ has to run to a backup node \\ & $n$ at level $u$ (DU/CU)\\ \hline
$\Phi_r^s$ & 1, if service request of slice type $s$ is accepted at RU $r$ \\ \hline
\end{tabular}
\label{tab:binary_decision}
\end{table}
\subsection{Costs and Revenue}

\subsubsection{Cost of node activation}
A node $n$ is considered activated when at least one function is instantiated on it, and this cost captures the fixed overhead of bringing that node into service.
The total cost of node activation is given by:
\begin{equation}
C_{N_a} = \sum_{n \in N} A_n \cdot C_{act}
\end{equation}

\subsubsection{Cost of VNF instantiation on a node}
The cost for instantiating VNFs differs for RU, DU, and CU nodes. The total cost of VNF instantiation is given by: 
\\(i) For shared backup scheme:
\begin{equation}
\begin{split}
    C_{VNF_i} = \sum_{n \in N} \sum_{f \in F} \sum_{r \in R} \sum_{s \in S} \sum_{v \in V} \sum_{p \in P_r} \bigl( \psi^{pu}_{n} \cdot\alpha_{rs}^{p} ( W^{v}_{uf} \cdot \text{Cost}_f^u )\bigr)\\+ \sum_{n \in N} \sum_{f \in F}
    \sum_{u \in U} \gamma^{nu}_f \cdot \text{Cost}_f^u
\end{split}
\end{equation}
(ii) For unshared backup scheme:\\
\begin{equation}
\begin{split}
    C_{VNF_i} =\sum_{n \in N} \sum_{f \in F} \sum_{r \in R} \sum_{s \in S} \sum_{v \in V} \sum_{p \in P_r} \bigl( \psi^{pu}_{n} (\alpha_{rs}^{p} + {\beta}_{rs}^{p}) \\\times ( W^{v}_{uf} \cdot \text{Cost}_f^u ) \bigr)
\end{split}
\end{equation}
In the above equation, binary variable \( W^{v}_{uf} \) ensures that only relevant costs are considered (i.e., only if the function $f$ is the part of a given VNC $v$ at level $u$). 

\subsubsection{Cost of wavelength activation}
The cost of activating a wavelength represents the expense of provisioning an optical channel in the DWDM layer, which includes transponder usage, spectrum allocation, and associated optical equipment costs required to establish a lightpath.
The cost of wavelength activation for each link is given by:
\begin{equation}
C_{W_a} = \sum_{\lambda \in W} \sum_{l \in L} \sum_{h \in H} \sum_{p \in P} x_{l}^{ph} \cdot \mu^{\lambda}_{p} \cdot C^{l}_{\lambda}
\end{equation}

\subsubsection{Revenue from accepted requests}
The revenue generated by admitting different slice types varies. The total revenue from all accepted requests is expressed as
\begin{equation}
R_{act} = \sum_{r \in R} \sum_{s \in S} \Phi^{s}_{r} \cdot Rev^{s}
\end{equation}

\subsection{Objective function}
The objective function maximizes the net profit, considering the revenues and costs defined above:
\begin{equation}
\begin{split}
   \text{Maximize} (R_{act} - C_{N_a} - C_{VNF_i} - C_{W_a}) 
\end{split}
\end{equation}

\subsection{Constraints}
We now define the constraints that the solution must satisfy.
\subsubsection{1-to-1 mapping of the RU and VNC}
Each accepted service request of a given slice type is mapped to exactly one VNC.
\begin{equation}
    \sum_{v \in V} Z^{v}_{rs}= \Phi^s_r \hspace{0.5 cm} \forall s \in S , \forall r \in R
\end{equation}


\subsubsection{Primary path constraint}
For each RU $r$ that has been allocated a particular service request of slice type $s$, exactly one primary path must be selected.
\begin{equation}
\sum_{p \in P_r} \alpha_{rs}^{p} = \Phi^s_r \hspace{0.5cm} \forall s \in S, \forall r \in R
\end{equation}
\subsubsection{Backup path constraint}
Each slice request that has been allocated at an RU $r$ (of slice type $s$), must be allocated $\tau -1$ backup paths for satisfying the reliability requirements of that slice type. Here, $\tau$ is calculated using a formula based on the availability requirement of a slice type and the availability of a single node $n$ in the network, as described in subsection \ref{availability}.
\begin{equation}
    \sum_{p \in P_{r}} \beta^{p}_{rs} = \left( \tau -1 \right) \cdot \Phi^s_r \hspace{0.5 cm}  \forall s \in S, \forall r \in R
\end{equation}
The product in the RHS restricts the number of backup paths to be exactly $\tau -1$.
\subsubsection{Backup and primary path disjointedness assurance}
This constraint ensures that for a given RU and slice allocated to that RU, a path $p$ can either be chosen as a primary path, or as a backup path, but NOT both together simultaneously. This is given by:
\begin{equation}
{\alpha}_{rs}^{p} + {\beta}_{rs}^{p} 
\leq 1 \hspace{0.5cm} 
\forall p \in P_r,
\forall s \in S, \forall r \in R.
\end{equation}

\subsubsection{Mapping between path and wavelength}
For any RU $r$, the selected primary and backup path should be assigned a unique wavelength. 
\begin{equation}
    \sum_{\lambda \in W} \mu^{\lambda}_p \leq 1 \hspace{0.3cm} \forall p \in P_r, \forall r \in R
\end{equation}
The above equation states that each path should be assigned at most one wavelength channel.
\begin{equation}
    \alpha^p_{rs} + \beta^p_{rs} \leq \sum_{\lambda \in W} \mu^{\lambda}_p \hspace{0.3cm} \forall p \in P_r, \forall s \in S, \forall r \in R
\end{equation}
This inequality guarantees that if a path $p$ serves as either a primary or backup path for an RU $r$ (when selected), it is assigned a wavelength.
\subsubsection{CU/DU node disjointedness}
\label{disjointness}
The primary and backup paths selected for a specific slice type $s$ allocated to an RU $r$ should have DU/CU node disjointness to fulfil the reliability requirements.

\begin{equation}
\begin{split}
       \alpha^{p}_{rs} \cdot \psi_{n}^{pu} +\beta^{p'}_{rs} \cdot \psi_{n}^{p'u} \leq 1 \hspace{0.5 cm} \forall n \in N, \forall p,p' \in P_r \hspace{0.1 cm} s.t. \hspace{0.1 cm} p \neq p'\\, \forall u \in \{0,1\},\forall s \in S, \forall r \in R
\end{split}
\end{equation}
\begin{equation}
\begin{split}
       \beta^{p}_{rs} \cdot \psi_{n}^{pu} +\beta^{p'}_{rs} \cdot \psi_{n}^{p'u} \leq 1 \hspace{0.5 cm} \forall n \in N, \forall p,p' \in P_r \hspace{0.1 cm} s.t. \hspace{0.1 cm} p \neq p'\\, \forall u \in \{0,1\},\forall s \in S, \forall r \in R
\end{split}
\end{equation}
Eq. 14 ensures that the primary path and the backup path selected for the same request will not share the same node $n$ acting as the DU or CU for both the paths. Eq. 15 ensures that if there is more than one backup path to be assigned (based on the value of $\tau$ for the given slice request), then these backup paths should also be DU/CU disjoint with each other. 

\subsubsection{Shared backup constraint}: 
Consider a scenario where backup paths for different slice requests pass through common nodes (CU/DU) and are assigned the same VNC. If these requests require the same function $f \in F$ at the same level $u \in U$ on those nodes, we instantiate a single VNF that is shared across all such requests, rather than deploying separate VNFs for each.
\begin{equation}
\begin{split}
Z^v_{rs} \cdot \beta_{rs}^p \cdot \psi_{n}^{p u} \cdot W_{uf}^v \leq \gamma_{f}^{nu} \hspace{0.2cm} \forall n \in N, \forall s \in S, \forall r \in R, \forall v \in V, \\ \forall f \in F, \forall u \in U, \forall p \in P_r
\end{split}
\end{equation}
The RHS limits the number of instances of function $f$ running at the DU/CU level of node $n$ (under VNC $v$) on the backup path of a request $r$ to at most one. Consequently, when multiple requests traverse the same node requiring the same function, they activate the same $\gamma_{f}^{nu}$, ensuring a single instantiation that is shared across them.

\subsubsection{Processing capacity constraint}
This constraint ensures that the total processing load on each node $n$, arising from running the VNFs of different slice requests associated with it, does not exceed the node’s processing capacity.
\begin{equation}
\begin{split}
\sum_{r \in R} \sum_{s \in S} \sum_{v \in V} \sum_{p \in P_r} \sum_{u \in U} \psi_{n}^{pu} \cdot (\alpha^p_{rs} + \beta^p_{rs}) \cdot Z^v_{rs}\cdot CPU^{v}_{u} & \leq \Upsilon_n \hspace{0.5 cm} \\\forall n \in N 
\end{split}
\end{equation}

\subsubsection{Latency constraint}
The latency of any link which is acting as a haul $h$ for a path $p$ should not exceed the maximum tolerable latency as per the VNC $v$ and slice $s$ assigned to that path, $ \sigma^{vs}_h$. 
\begin{equation}
\begin{split}
    x_l^{ph} \cdot (\alpha^p_{rs} + \beta^p_{rs}) \cdot Z^v_{rs} \cdot L^p_h \leq \sigma^{vs}_h \hspace{0.5 cm} 
    \\ \forall p \in P, \forall h \in H, \forall v \in V, \forall l \in L,\forall s \in S,\forall r \in R,
    \end{split}
\end{equation}
where $\sigma^{vs}_h$ is determined as the \textit{minimum} of the maximum tolerable latency for the given haul $h$ for the chosen VNC and the chosen slice type respectively. 


\subsubsection{Link capacity constraints}
This constraint ensures that the total bandwidth consumed on all paths traversing a link does not exceed the link's available capacity.
\begin{equation}
      \sum_{r \in R}\sum_{s \in S} \sum_{p \in P_r} \sum_{v \in V} (\alpha^p_{rs} + \beta^p_{rs}) \cdot Z^v_{rs} \cdot x^{ph}_{l} \cdot b^{vs}_h \leq \xi_{l} \hspace{0.5 cm} \forall l \in L,
\end{equation}
where $b^{vs}_h$ is computed as the \textit{maximum} of the bandwidth requirements corresponding to the chosen VNC for the haul and the request’s specific slice type.

Under WDM, each wavelength $\lambda \in W$ operates as an independent data channel. The combined capacity of these channels on a given link must not exceed the link capacity.

\begin{equation}
    \sum_{p \in P} \sum_{\lambda \in W} \sum_{h \in H} x^{ph}_l \cdot \mu^p_{\lambda} \cdot B_{\lambda} \leq \xi_{l}  \hspace{0.3cm}\forall l \in L
\end{equation}

\subsubsection{Unique wavelength assignment constraint}
Ensures unique wavelength assignment on any given link:
\begin{equation}
\sum_{h \in H} \sum_{p \in P_r} x^{ph}_{l} \cdot \mu^{\lambda}_{p} \leq 1\hspace{0.5 cm} \forall \lambda \in W, \forall l \in L.
\end{equation}
This constraint ensures that a wavelength \( \lambda \) is assigned to at most one path passing through link \( l \). 


\subsubsection{Activation constraint}
This constraint ensures that if any VNF $f$ is hosted on any node $n$, then the node should be marked activated. 
\begin{equation}
\begin{split}
    \sum_{p \in P_r} \sum_{u \in U} \sum_{v \in V} \left( \alpha_{rs}^p + \beta^p_{rs} \right) \cdot Z^{v}_{rs} \cdot \psi^{pu}_n \cdot \eta_{u}^{v}\leq A_n \hspace{0.5cm}\\ \forall s \in S, \forall r \in R, \forall n \in N.
\end{split}
\end{equation}
If any VNF instance of type $f$ is placed at node $n$ for a given request, the node must be activated (i.e., $A_n=1$), thus linking function placement decisions on the LHS with node activation on the RHS.


\subsubsection{Wavelength capacity constraint}


Each slice request is associated with a certain bandwidth requirement ($b^{vs}_h$). When multiple requests are mapped onto the same wavelength channel $\lambda$, the total bandwidth allocated to that wavelength must not exceed its available capacity $B_\lambda$. This ensures that the cumulative demand of all requests using wavelength $\lambda$ remains within the physical transmission limit of that channel.

\begin{equation}
    \sum_{r \in R} \sum_{s \in S} \sum_{v \in V} \sum_{p \in P} \sum_{h \in H} (\alpha_{rs}^p + \beta^p_{rs}) \cdot Z^{v}_{rs} \cdot \mu^{\lambda}_p \cdot b^{vs}_h \leq B_{\lambda} \hspace{0.3cm} \forall \lambda \in W
\end{equation}

\subsubsection{Linearization of non-linear constraints}
In the above formulation, some constraints are non-linear since they involve the product of two binary decision variables. We address these by introducing auxiliary variables, handling them one at a time.
\par The processing/link capacity constraints, latency constraint and activation constraint have non-linearity because of the term: $\left( \alpha_{rs}^p + \beta^p_{rs} \right) \cdot Z^{v}_{rs}$, so we introduce the term $\hat{\alpha}^{pv}_{rs}$ and $\hat{\beta}^{pv}_{rs}$, where $\hat{\alpha}^{pv}_{rs}= \alpha_{rs}^p \cdot Z^{v}_{rs}$ and $\hat{\beta}^{pv}_{rs} = \beta^p_{rs} \cdot Z^{v}_{rs}$,  such that:
\begin{equation}
    \hat{\alpha}^{pv}_{rs} \leq Z^{v}_{rs}
\end{equation}
\begin{equation}
    \hat{\alpha}^{pv}_{rs} \leq \alpha_{rs}^{p}
\end{equation}
\begin{equation}
    \hat{\alpha}^{pv}_{rs} \geq Z^{v}_{rs} + \alpha_{rs}^{p} - 1
\end{equation}
\begin{equation}
    \hat{\beta}^{pv}_{rs} \leq Z^{v}_{rs}
\end{equation}
\begin{equation}
   \hat{\beta}^{pv}_{rs} \leq \beta_{rs}^{p}
\end{equation}
\begin{equation}
    \hat{\beta}^{pv}_{rs} \geq Z^{v}_{rs} + \beta_{rs}^{p} - 1
\end{equation}
Then, we can resolve the non-linearity for all these constraints by rewriting the term $(\alpha^p_{rs} + \beta^p_{rs}) \cdot Z^v_{rs}$ as:
\begin{equation*}
    (\alpha^p_{rs} + \beta^p_{rs}) \cdot Z^v_{rs} = (\alpha^p_{rs} \cdot Z^v_{rs} + \beta^p_{rs} \cdot Z^v_{rs})
\end{equation*}
\begin{equation*}
   = ( \hat{\alpha}^{pv}_{rs} + \hat{\beta}^{pv}_{rs})
\end{equation*}
Note that the shared backup constraint is also non-linear due to the product of $Z^v_{rs}$ and $\beta^p_{rs}$, which can be linearized by using $\hat{\beta}^{pv}_{rs}$ defined above.
\par Finally, the wavelength capacity constraint is also non-linear, because of the product of two decision binaries, even if we use 
$(\hat{\alpha}^{pv}_{rs}+ \hat{\beta}^{pv}_{rs})$ (because this term would still be multiplied by $\mu^{\lambda}_p$). Thus we introduce two auxiliary variables: $\tilde{\alpha}^{pv\lambda}_{rs}$  and $\tilde{\beta}^{pv\lambda}_{rs}$ where $\tilde{\alpha}^{pv\lambda}_{rs} = \hat{\alpha}^{pv}_{rs} \cdot \mu^\lambda_p$ and $\tilde{\beta}^{pv\lambda}_{rs}= \hat{\beta}^{pv}_{rs} \cdot \mu^\lambda_p$ such that:
\begin{equation}
    \tilde{\alpha}^{pv\lambda}_{rs} \le \hat{\alpha}^{pv}_{rs}
\end{equation}
\begin{equation}
    \tilde{\alpha}^{pv\lambda}_{rs} \le \mu^\lambda_p
\end{equation}
\begin{equation}
    \tilde{\alpha}^{pv\lambda}_{rs} \ge \hat{\alpha}^{pv}_{rs} + \mu^\lambda_p - 1
\end{equation}
and similarly for $\tilde{\beta}^{pv\lambda}_{rs}$.

\section{Heuristic and Metaheuristic approaches}


We now establish the computational complexity of \textit{Slicing-Aware vRAN Functional Placement with Lightpath Provisioning (SA-vFPLP) problem}. 

\begin{theorem}
	SA-vFPLP is NP-hard.
\end{theorem}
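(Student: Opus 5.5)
We prove NP-hardness by a polynomial-time reduction to a restricted special case of SA-vFPLP; if the restriction is already NP-hard, so is the general problem. The natural source problem is the 0--1 multidimensional knapsack, or more simply the classical 0--1 knapsack problem, which is NP-hard in the weak sense. This matches the admission-control structure of the objective: each slice request $r$ yields revenue $Rev^{s}$ if accepted and consumes shared processing capacity $\Upsilon_n$. An alternative that gives strong NP-hardness is a reduction from the routing and wavelength assignment (RWA) problem on the optical layer, which is strongly NP-complete in its decision form. We present the knapsack reduction as the main argument and note the RWA route as a second option.

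\textbf{Construction.} Take a knapsack instance with items $i=1,\dots,k$, values $c_i$, weights $w_i$ and capacity $K$. Build a network with one RU $r_i$ per item and a single CU/DU node $n^*$ connected to every RU and to the core node CN.

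The remaining parameters are chosen to neutralise every constraint except the processing capacity of $n^*$:
\begin{itemize}
\item Set $\Gamma_s$ and $\chi_n$ so that $\tau=1$ by eq.~(\ref{eq:avail-formula}). Then no backup paths are needed, $\beta\equiv 0$, and the disjointness and shared-backup constraints become vacuous.
\item Allow only a single VNC, for example VNC~1 (D-RAN). Each request then deterministically uses a fixed set of functions and a fixed CPU demand.
\item Give each RU its own copy of the slice type, or use one slice type with per-RU processing demand, so that request $r_i$ loads $n^*$ by exactly $w_i$. If per-request demands cannot vary because $CPU^v_u$ depends only on $v$, encode $w_i$ by giving RU $r_i$ a bundle of parallel identical requests. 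Alternatively, scale instead through the per-link bandwidth $b^{vs}_h$ and the link capacity $\xi_l$ on the shared CN link.
\item Set $\Upsilon_{n^*}=K$.
\item Make latency budgets, link capacities, $|W|$ and $B_\lambda$ large, so the optical constraints are always satisfiable. Give every RU--CN path a distinct wavelength, which is possible because each path uses a distinct access link and $|W|$ is at least the number of paths through the CN link.
\item Set activation and wavelength costs to $0$ and $Rev^{s_i}=c_i$.
\end{itemize}
Under these choices, the optimal profit equals the optimal knapsack value, and the instance is built in polynomial time.

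\textbf{Correctness.} We show both directions. First, any feasible knapsack set $I$ maps to accepting exactly the requests at $\{r_i:i\in I\}$ on their unique paths, and this assignment satisfies all constraints. Conversely, any feasible SA-vFPLP solution's accepted set respects $\sum w_i\le K$ by the processing capacity constraint. Hence the decision version ``profit $\ge T$?'' is equivalent to the knapsack decision question.

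\textbf{Main obstacle.} The hardest step is encoding heterogeneous weights $w_i$, because the model fixes CPU demand per VNC and level, not per request. Our first attempt is to use the link capacity constraint instead of processing capacity. Route all requests over a common backhaul link $l^*$ into CN with $\xi_{l^*}=K$, and realise weight $w_i$ through the per-slice bandwidth $b^{vs}_h$, by declaring a distinct slice type per item. If slice types must be exactly the three given, we fall back to the RWA reduction instead. There, all capacities and costs are trivial, each request has a single candidate path, and acceptance of all requests is possible if and only if a conflict-free wavelength assignment with $|W|$ colours exists, by the unique wavelength assignment constraint and the no-conversion assumption. This reduction is valid because wavelength assignment on fixed paths is equivalent to path-conflict graph colouring, which is NP-hard. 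We expect the RWA version to be the most robust proof.
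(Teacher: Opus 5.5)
\textbf{Your knapsack reduction does not go through, and no variant of it can within this model.}

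First, VNC~1 is D-RAN, with the whole stack at the cell site (backhaul only). So $n^*$ receives no processing load, and $\Upsilon_{n^*}=K$ constrains nothing.

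More fundamentally, per-request resource use is read off fixed tables:
$CPU^v_u$ depends only on $(v,u)$, $b^{vs}_h$ only on $(v,s,h)$, and $Rev^s$ only on $s$, with $V$ and $S=\{S_0,S_1,S_2\}$ fixed. Hence every weight lies in a constant-size set. Knapsack with $O(1)$ distinct weights is polynomial: enumerate how many items are taken from each weight class, then take the most valuable items within each class.

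The bundle idea fails twice. Each RU carries exactly one request ($\sum_s y^s_r=1$), so $w_i$ separate RUs can be admitted independently, and nothing enforces all-or-nothing. And a unary encoding of $w_i$ is only pseudo-polynomial, which proves nothing for a weakly NP-hard source problem. Declaring one slice type per item changes the problem rather than reducing to it.

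\textbf{The RWA reduction is the one to keep, and it takes a genuinely different route from the paper's.} The paper reduces from MMKP. It identifies RUs with groups, (VNC, primary path, backup set, wavelength) tuples with items, and node CPU and link bandwidth with $K=2$ resources. That mapping shows how a restricted SA-vFPLP instance reads as an MMKP instance. It never shows how an arbitrary MMKP instance, with arbitrary values and consumptions, is realised in SA-vFPLP, and that is exactly the obstacle you hit above.

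Your route avoids weights altogether. You make all numeric constraints slack: $\tau=1$, zero costs, a single slice type, large $\Upsilon_n$ and $B_\lambda$, $\xi_l\ge |W|B_\lambda$, and small delays. Hardness then comes purely from the unique-wavelength constraint. Since all numbers stay polynomially bounded, you even get strong NP-hardness.

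One step is missing. Every path ends at the single core node CN, so general path-colouring hardness cannot be quoted verbatim. On a tree, for example, two CN-bound paths share a link iff they reach CN through the same child subtree. The conflict graph is then a disjoint union of cliques, and colouring is trivial.

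You need a gadget. Given a graph $H$, create a link $l_e$ for each edge $e$ of $H$. Let the single candidate path of RU $r_a$ traverse the links $l_e$ with $e\ni a$ consecutively, joined by private links, and end with a private link into CN. The paths of $r_a$ and $r_b$ share a link iff $a$ and $b$ are adjacent in $H$. So with $|W|=3$, unit revenue and zero costs, profit $\ge |V(H)|$ iff $H$ is 3-colourable.

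Also state explicitly that you read the unique-wavelength constraint as ranging over all $p\in P$; the paper writes $P_r$ with $r$ unbound. Read per RU, it imposes no cross-RU conflicts, and the reduction collapses.
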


\begin{proof}
	We reduce from the Multi-dimensional Multiple-choice Knapsack Problem (MMKP) \cite{akbar2006solving}. MMKP consists of $M$ groups of items and $K$ resource types. Each item $j$ in group $i \in M$ has a value $v_{i,j}$ and consumes $z_{i,j,k}$ units of resource $k \in K$. The goal is to select exactly one item from each group to maximize the total value without exceeding the capacity $Z_k$ of any resource.
	
	We define a restricted version of our problem as follows. Each RU corresponds to a group in MMKP, and each valid tuple of \emph{(VNC, primary path, backup path set, wavelength)} corresponds to an item. 
Node processing capacity and link bandwidth map to the two resource types ($K=2$), and the profit (revenue minus cost) from selecting a tuple is the value of that item. The constraint that exactly one tuple must be chosen for each RU matches the one-item-per-group rule in MMKP. Thus, maximizing profit in our problem is equivalent to maximizing value in MMKP.
	
	This reduction is polynomial in input size. Since MMKP is NP-hard, our restricted problem is also NP-hard, i.e., MMKP $\leq_P$ SA-vFPLP. The full formulation further includes node-disjoint and shared backups, making it at least as hard as the restricted case. Therefore, the general vRAN functional placement with wavelength provisioning problem is NP-hard.
	
	This implies that while the ILP provides exact solutions for small networks, it does not scale to large topologies, which motivates the use of heuristic and metaheuristic approaches.
\end{proof}

\subsection{Greedy Heuristic Approach}

Due to the computational complexity of the ILP approach, we also propose a greedy heuristic for the vRAN functional placement problem with lightpath provisioning. The algorithm proceeds sequentially, beginning by sorting RUs according to slice type priority, with URLLC ranked highest, followed by eMBB and mMTC. 
For each RU, the required number of backup paths is determined from the availability parameter $\tau$. The $K$ shortest paths between every RU and the CN are precomputed.  Among these, the algorithm selects one as the primary path and then chooses the $\tau-1$ node-disjoint backup paths to satisfy availability requirements.

The algorithm then iterates through VNCs in descending order of priority. For each candidate primary path, it checks feasibility against processing, latency, and bandwidth constraints. Once a feasible primary path is identified, wavelengths are assigned to both primary and backup paths while ensuring wavelength uniqueness and capacity constraints. After a VNC, wavelength, and primary/backup paths are allocated for a request, the algorithm updates the resource state, including node processing capacity, link bandwidth, and wavelength usage, before moving to the next RU.

This heuristic strikes a balance between computational efficiency and solution quality. It achieves significantly shorter execution times, even for large topologies, making it scalable. However, as with most greedy algorithms, it does not guarantee globally optimal solutions. This limitation becomes more evident in highly constrained scenarios with heavy resource contention, where the heuristic may yield suboptimal placements compared to exact ILP-based solutions.

\begin{algorithm}[t]
\caption{Greedy Heuristic for SA-vFPLP}
\KwIn{Network topology $G(N,E)$; RU set $R$; VNC set $V$; wavelength set $W$; $K$-shortest paths between each RU and CN}
\KwOut{$S$ = assignment of VNC, paths, and wavelength to RUs}

$S \leftarrow \emptyset$\;

\SetKwProg{Fn}{Function}{:}{}
\Fn{\textsc{Provision}$(r)$}{
  Compute required backup paths ($\tau$) based on availability requirement\;
  \For{each $v \in V$ in descending priority}{
    \For{each candidate primary path $p$}{
      \If{$p$ is feasible (subject to node capacity, latency, bandwidth constraints)}{
        Find $\tau - 1$ node-disjoint backup paths for $p$\;
        \If{backup paths found}{
          \For{each $\lambda \in W$}{
            \If{$\lambda$ feasible (subject to wavelength constraints)}{
              Assign wavelength $\lambda$ to primary and backup paths\;
              Update resource usage\;
              Add $(r, v, p, \lambda)$ to $S$\;
              \textbf{return}\; 
            }
          }
        }
      }
    }
  }
}

Sort $R$ by slice type priority (URLLC $>$ eMBB $>$ mMTC)\;
\For{each $r \in R$}{
  \textsc{Provision}$(r)$\;
}
\textbf{return} $S$\;
\end{algorithm}

\subsection{Genetic Algorithm}

While the greedy heuristic substantially reduces the computational burden of the ILP, its solution quality can be noticeably sub-optimal on some instances. To improve solution quality, we also propose a metaheuristic -- Genetic Algorithm (GA), an evolutionary optimization method inspired by natural selection \cite{holland1992adaptation}. 

In SA-vFPLP, a chromosome represents a complete assignment for all RUs. Each gene corresponds to one RU and encodes the 5-tuple \texttt{(RU\_id, VNC\_id, primary\_path\_id, backup\_path\_ids, wavelength)}, i.e., the chosen VNC, one primary path, $\tau-1$ node-disjoint backup paths, and a wavelength. Feasibility must satisfy capacity, latency, disjointness, wavelength-availability, and slice-specific constraints. We assume the $K$ shortest paths between each RU and the CN are precomputed and available to the GA.

Initialization draws a population of feasible chromosomes by sampling valid VNCs per RU, selecting one primary path from the precomputed $K$-shortest set, choosing $\tau-1$ node-disjoint backups, and assigning an available wavelength that does not violate conflict rules. This ensures the search starts in the feasible region of SA-vFPLP.

Evolution proceeds by repeatedly applying selection, crossover, and mutation. We use tournament selection to bias reproduction toward fitter chromosomes while preserving diversity. Single-point crossover recombines parental segments because recombination can violate SA-vFPLP semantics (e.g., produce node-disjoint backup paths or oversubscribe node/wavelength capacities). Each offspring is passed through a repair routine that adjusts paths and/or wavelengths to restore feasibility, resampling from the relevant $K$-shortest sets when necessary. Mutation perturbs a randomly chosen RU gene by changing its \texttt{VNC\_id}, path choices, or wavelength; the resulting chromosome is again repaired to maintain feasibility.

Fitness evaluates the same objective as the ILP (eq. 7): the net profit defined as total revenue from admitted slice requests minus node activation, VNF instantiation, and wavelength activation costs, subject to the same feasibility constraints.
The GA iterates until reaching a generation cap or a convergence criterion such as no improvement in best fitness over $g$ consecutive generations.

Compared with greedy heuristics, GA explores a broader portion of the SA-vFPLP search space and is less prone to getting trapped in poor local optima. In practice, it often yields higher-quality solutions. However, the trade-off is increased computational cost and potentially longer convergence times relative to the greedy approach.


\begin{algorithm}[t]
\caption{Genetic Algorithm for SA-vFPLP}
\KwIn{RU set $R$, VNC set $V$, network topology $G(N,E)$, population size $P$, generations $G$}
\KwOut{Best VNC, paths and wavelength assignment solution $S_{best}$}

Initialize population \textit{Pop} with $P$ random feasible solutions\;
\For{$g = 1$ \KwTo $G$}{
  $\mathrm{\textit{NewPop}} \gets \emptyset$\;
  \While{$|\mathrm{\textit{NewPop}}| < P$}{
    Select parents $p_1, p_2$ using tournament selection\;
    $\mathrm{\textit{offspring}} \gets Crossover(p_1, p_2)$\;
    $Mutate(\mathrm{\textit{offspring}})$\;
    Repair \textit{offspring} if infeasible\;
    Add \textit{offspring} to \textit{NewPop}\;
  }
  Evaluate fitness of solutions in \textit{NewPop}\;
  \textit{Pop} $\gets$ \textit{NewPop}\;
  Update $S_{best}$ if improved\;
}
\textbf{return} $S_{best}$\;
\end{algorithm}

\section{Results and Discussion}
\subsection{Experimental Setup}
We implemented our ILP using IBM ILOG CPLEX Optimization Studio v22.1.0. 
The ILP code used the DOcplex library in Python \cite{ibm_docplex_2025}, which is a Python wrapper for the CPLEX studio. 
The heuristic and GA codes are also written in Python. For evaluation, we used a set of realistic network topologies with varying node counts (16, 32, 64, and 128 nodes), derived from the futuristic RAN aligned with the PASSION project \cite{ahmed2024sliavailran} \cite{morais2022placeran} (See Fig.~\ref{fig:topologies}). 
These topologies represent different scales of deployment scenarios for 5G and beyond networks, from small cell deployments to large metropolitan areas \cite{ahmed2024sliavailran}\cite{morais2022placeran}.
\begin{figure*}[htbp]
    \centering
    \begin{subfigure}[b]{0.48\textwidth}
        \centering
        \includegraphics[width=0.9\linewidth]{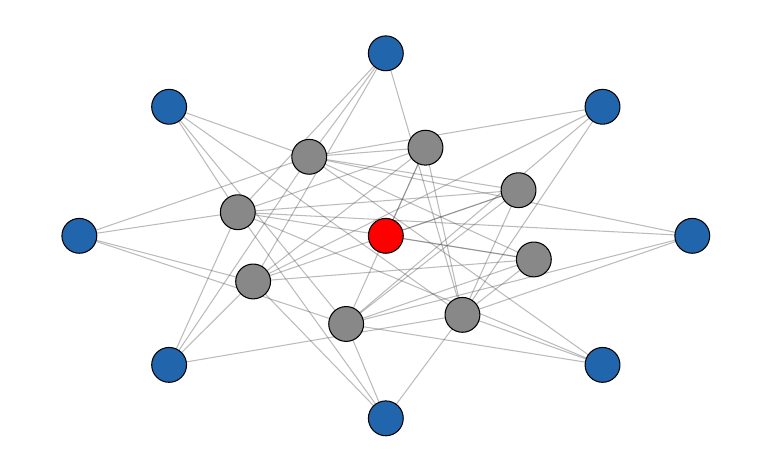}
        \caption{16-Node Topology}
        \label{fig:topo16}
    \end{subfigure}
    \hfill
    \begin{subfigure}[b]{0.48\textwidth}
        \centering
        \includegraphics[width=0.9\linewidth]{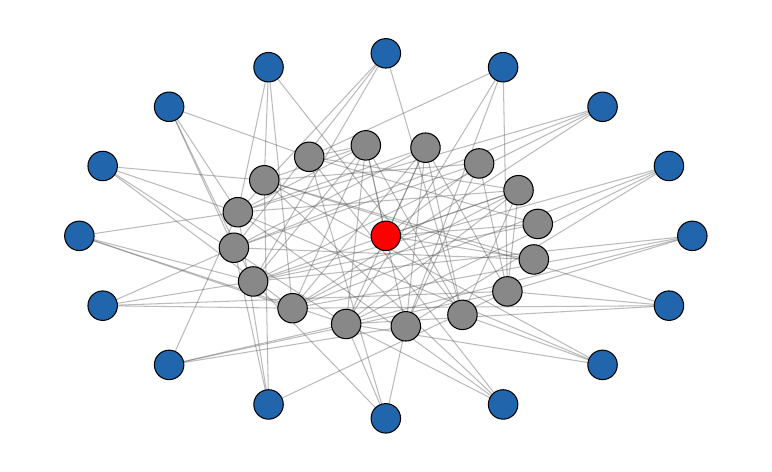}
        \caption{32-Node Topology}
        \label{fig:topo32}
    \end{subfigure}

    \vspace{0.1cm}

    \begin{subfigure}[b]{0.48\textwidth}
        \centering
        \includegraphics[width=\linewidth,height=4cm]{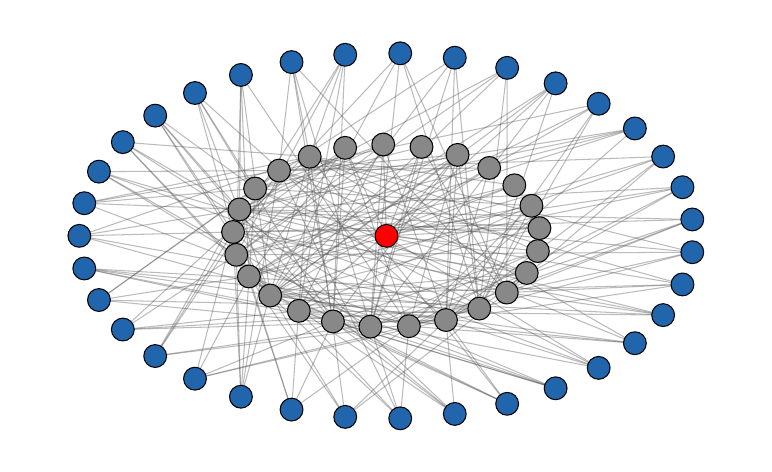}
        \caption{64-Node Topology}
        \label{fig:topo64}
    \end{subfigure}
    \hfill
    \begin{subfigure}[b]{0.48\textwidth}
        \centering
        \includegraphics[width=\linewidth,height=4cm]{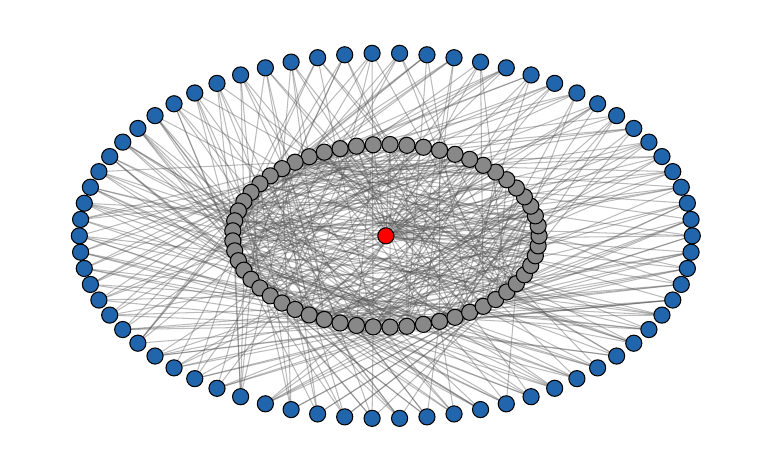}
        \caption{128-Node Topology}
        \label{fig:topo128}
    \end{subfigure}

    \vspace{0.15cm}
    \centering
    \textcolor{blue}{\rule{0.35cm}{0.35cm}}~RU (Radio Unit)\quad
    \textcolor{gray}{\rule{0.35cm}{0.35cm}}~DU/CU (Dist./Central Unit)\quad
    \textcolor{red}{\rule{0.35cm}{0.35cm}}~Core

    \caption{Visualization of different topologies used for experimental 
             evaluation: (a) 16-node, (b) 32-node, (c) 64-node, and (d) 
             128-node. The red node represents the core node, and blue 
             nodes represent RUs.}
    \label{fig:topologies}
\end{figure*}
Each topology comprises three node types: RUs, CUs/DUs, and a single 5G core node. In the 16-, 32-, 64-, and 128-node topologies, the numbers of RUs are 8, 16, 35, and 70, respectively. The remaining nodes in each case (in addition to the single 5G core) serve as candidate CU/DU locations with heterogeneous processing capacities. In Fig. \ref{fig:topologies}, RUs are shown in blue, the 5G core in red, and the CU/DU nodes in grey.

\par In our network model, RU nodes provide 8 CPU units of processing capacity, DU/CU nodes provide 16 CPU units each, and the 5G core node provides 64 CPU units to handle centralized functions. Links from RUs to candidate DUs have a capacity of 50 Gbps; DU–CU links have 100 Gbps and links to the core operate at 800 Gbps (following the PASSION project \cite{passion_project_2020}). Propagation delays on these links range from 0.103 to 0.271 ms, consistent with RAN.

\par Following our prior work \cite{ahmed2024sliavailran}, VNF-instantiation costs are set to 3, 2, and 1 on RU, DU, and CU nodes, respectively, reflecting the decreasing cost of hosting functions as we move toward more centralized nodes in vRAN deployments. The node-activation cost is 1, and the wavelength-activation cost is 1 unit. We assume 40 available wavelength channels in the C-band with 100-GHz (0.8-nm) spacing, as specified by ITU G.694.1 \cite{itu_g6941_2020}. The revenues for accepting URLLC, eMBB, and mMTC slice requests are set to 1000, 500, and 250 units, respectively, reflecting their relative importance and resource demands.
The absolute revenue and cost values are representative planning parameters, not absolute monetary quantities; the optimization behaviour is primarily governed by their relative ratios. The URLLC:eMBB:mMTC revenue ordering reflects the relative SLA strictness of these service classes, while the RU:DU:CU instantiation-cost ordering captures the increasing pooling efficiency toward centralized nodes. Operators can replace these values with deployment-specific business parameters without changing the proposed formulation.
\begin{table}[t]
\centering
\begin{threeparttable}
\caption{One-way Latency and Bandwidth for Different VNCs (Adopted from \cite{morais2022placeran})}
\label{tbl:vnc-parameters}
\setlength{\tabcolsep}{5pt} 
\renewcommand{\arraystretch}{1.15} 
\begin{tabularx}{\columnwidth}{|c|YYY|YYY|}
\hline
\rowcolor[HTML]{E6E6E6}
\textbf{VNC} &
\multicolumn{3}{c|}{\textbf{One-way Latency (ms)\tnote{1}}} &
\multicolumn{3}{c|}{\textbf{Bandwidth (Gbps)\tnote{2}}} \\
\cline{2-7}
\rowcolor[HTML]{E6E6E6}
 & \textbf{BH} & \textbf{MH} & \textbf{FH} & \textbf{BH} & \textbf{MH} & \textbf{FH} \\
\hline
9 & 1.5$\sim$10 & 1.5$\sim$10 & 0.25 & 9.9 & 13.2 & 42.6 \\
8 & 1.5$\sim$10 & 1.5$\sim$10 & 0.25 & 9.9 & 13.2 & 42.6 \\
7 & 1.5$\sim$10 & 1.5$\sim$10 & 0.25 & 9.9 & 13.2 & 13.6 \\
6 & 1.5$\sim$10 & 1.5$\sim$10 & 0.25 & 9.9 & 13.2 & 13.6 \\
5 & 1.5$\sim$10 & 1.5$\sim$10 & --   & 9.9 & 13.2 & --   \\
4 & 1.5$\sim$10 & 1.5$\sim$10 & --   & 9.9 & 13.2 & --   \\
3 & 1.5$\sim$10 & --          & 0.25 & 9.9 & --   & 13.6 \\
2 & 1.5$\sim$10 & --          & 0.25 & 9.9 & --   & 42.6 \\
1 & 1.5$\sim$10 & --          & --   & 9.9 & --   & --   \\
\hline
\end{tabularx}

\begin{tablenotes}[flushleft]
\footnotesize
\item[1] Maximum tolerable one-way latency.
\item[2] Taken in accordance with the PASSION Project \cite{passion_project_2020}.
\end{tablenotes}
\end{threeparttable}
\end{table}
\par The availability of each processing node was set to 0.999, which is consistent with typical reliability values for network equipment. For different slice types, we maintained the following availability requirements: URLLC slices with availability requirements randomly chosen from [0.9999, 0.99999], mMTC from [0.95, 0.999], and eMBB from [0.99, 0.999] \cite{ahmed2024sliavailran}. 
Parameters associated with different VNCs, such as CPU usage at nodes, latency, and bandwidth requirements at the backhaul, fronthaul, and midhaul links for each VNC, were adopted from our previous work \cite{ahmed2024sliavailran}. 
Table \ref{tbl:vnc-parameters} provides the details of the same. The bandwidth allocation costs for fronthaul, midhaul, and backhaul were set at 1 unit.
For each topology, we generated multiple batches of slice requests with different slice-type mixes and evaluated our framework. 

\subsection{Results and Evaluation}

\subsubsection{\textbf{Slice acceptance}}
Fig. \ref{fig:rar_all} shows the request acceptance rate, i.e., the percentage of total slice requests successfully admitted. We observe that the shared scheme achieves a higher or equivalent acceptance compared to its unshared counterpart, for all settings. This is because sharing backup VNF instances whenever possible frees compute capacity that can accommodate additional requests. 
This gap between shared and unshared versions is more pronounced in the 32-node topology, reaching 4.1\% for ILP, 6.7\% for heuristic, and 8.3\% for GA. At 16-nodes, both shared and unshared versions converge to 85.2\% in the case of ILP. 
\begin{figure*}[t]
  \centering
  \begin{subfigure}[b]{0.325\textwidth}
    \includegraphics[width=\textwidth]{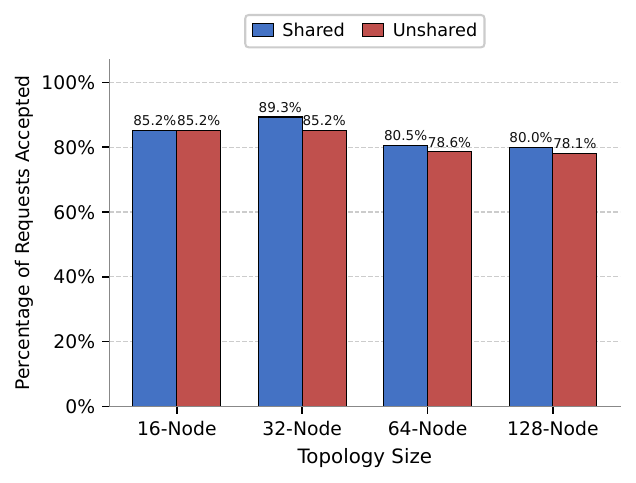}
    \caption{ILP Algorithm}
    \label{fig:rar_ilp}
  \end{subfigure}
  \hfill
  \begin{subfigure}[b]{0.325\textwidth}
    \includegraphics[width=\textwidth]{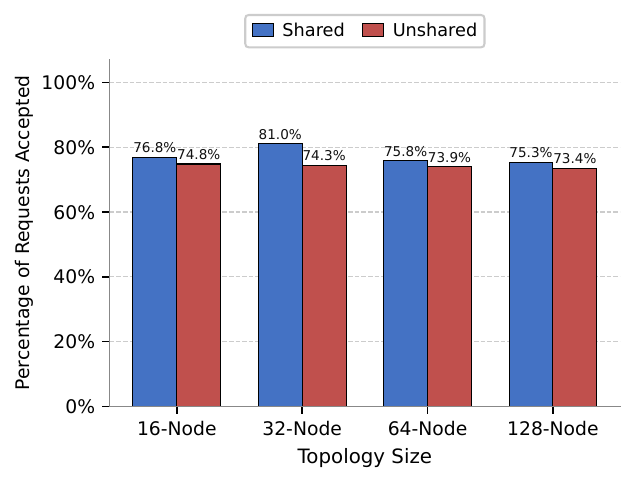}
    \caption{Heuristic Algorithm}
    \label{fig:rar_heuristic}
  \end{subfigure}
  \hfill
  \begin{subfigure}[b]{0.325\textwidth}
    \includegraphics[width=\textwidth]{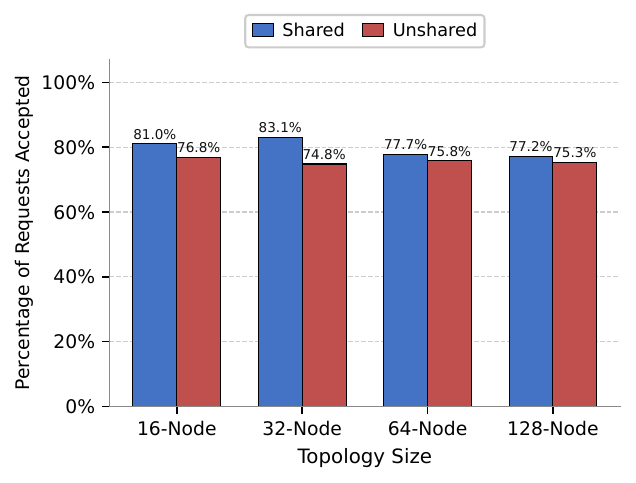}
    \caption{Genetic Algorithm}
    \label{fig:rar_ga}
  \end{subfigure}
  \caption{Request Acceptance Rate (\%) across shared and unshared variants 
           of ILP, Heuristic, and GA algorithms for different topology sizes.}
  \label{fig:rar_all}
\end{figure*}
Also, we see that the acceptance rate slightly increases at 32 nodes and then decreases and stabilizes at 64 and 128 nodes, for all algorithms. 
We have fewer DU/CU nodes at 16-nodes, leading to a limited path diversity. This makes it harder for higher availability slice requests, like URLLC, to satisfy node-disjoint backup paths requirements. 
The ratio of DU/CU nodes to RUs is more favorable for 32-nodes, providing a better path diversity while keeping the total volume of competing requests manageable, leading to higher acceptance rates across all settings. 
For topologies beyond 32 nodes, the number of slice requests grows proportionally with the number of RUs (35 RUs at 64 nodes, 70 at 128 nodes). Thus, despite the availability of seemingly more physical resources, the cumulative demand for primary and backup lightpaths intensifies resource contention. This causes the acceptance to slightly decline before stabilizing. 
Across all topologies, ILP consistently shows higher acceptance (e.g., 80\% for shared at 128 nodes), followed by GA (77.2\%) and Heuristic (75.3\%). 

\subsubsection{\textbf{VNC assignment}}

As detailed in Section \ref{section:vnc}, higher VNC IDs indicate higher priority configurations based on practical deployment trade-offs (centralization vs. transport requirements vs. feasibility). 
Specifically, NG-RAN-III (VNCs 6–9) is highest-priority; C-RAN (VNCs 4–5), though more centralized, ranks below NG-RAN-III due to stringent fronthaul demands; NG-RAN-II (VNCs 2–3) and D-RAN (VNC 1) have the lowest priority as they retain most functions at the edge and forgo pooling gains.

In our evaluation, the VNC distribution (Fig. \ref{fig:vnc-distribution-grid}) shows clear allocation patterns that vary with topology size and algorithm choice.
For the 16-node topology, ILP (unshared) concentrates placements in higher-priority VNCs (VNC~8: 37.5\%, VNC~7: 25\%, VNC~6: 37.5\%). ILP (shared) shifts further upward (VNC~9: 25\%, VNC~8: 62.5\%, VNC~6: 12.5\%). The heuristic shows greater dispersion: the \textit{unshared} form assigns 62.5\% to lower-priority VNCs (VNC~5 and below) and 25\% to VNC~9 (with the remaining 12.5\% in mid-tier VNCs), whereas the shared form tilts toward higher-priority options (VNC~9: 37.5\%, VNC~8: 12.5\%, VNC~7: 12.5\%, and 37.5\% to VNCs~5, 3, and 2). The GA follows a similar pattern: unshared (VNC~9: 25\%, VNC~7: 37.5\%, VNC~4: 37.5\%); shared (VNC~9: 37.5\%, VNC~7: 37.5\%, VNC~6: 25\%).

In the 32-node topology, ILP (unshared) concentrates 62.5\% in VNC~6, whereas ILP (shared) redistributes toward higher-priority options -- VNC~7 (31.3\%), VNC~8 (37.5\%), and VNC~9 (12.5\%). As the network grows, allocations shift upward: at 64 nodes both ILP variants reduce VNC~6 (25.7\% unshared; 20\% shared) and increase VNC~9 (34.3\% vs.\ 31.4\%). 
At 128 nodes, both continue to favor higher-priority VNCs (greater mass on VNCs~4--9). 
GA and heuristic show a similar trend in larger topologies; for example, at 128 nodes, GA (unshared) places 38.6\% in VNC~6 and 32.9\% in VNC~7, while GA (shared) assigns 50\% to VNC~7 and 32.9\% to VNC~9 with minimal use of lower VNCs. 
In smaller topologies (16/32 nodes), the heuristic and GA allocate substantial shares to lower VNCs, while ILP places nearly all requests above VNC~5, even if the absolute share of VNC~9 remains modest. Overall, ILP (especially with shared backup) drives placements toward higher-priority configurations. The heuristic and GA remain more dispersed across VNCs.

\begin{figure*}[htbp]
    \centering

    \begin{subfigure}[t]{0.48\textwidth}
        \centering
        \includegraphics[width=\textwidth]{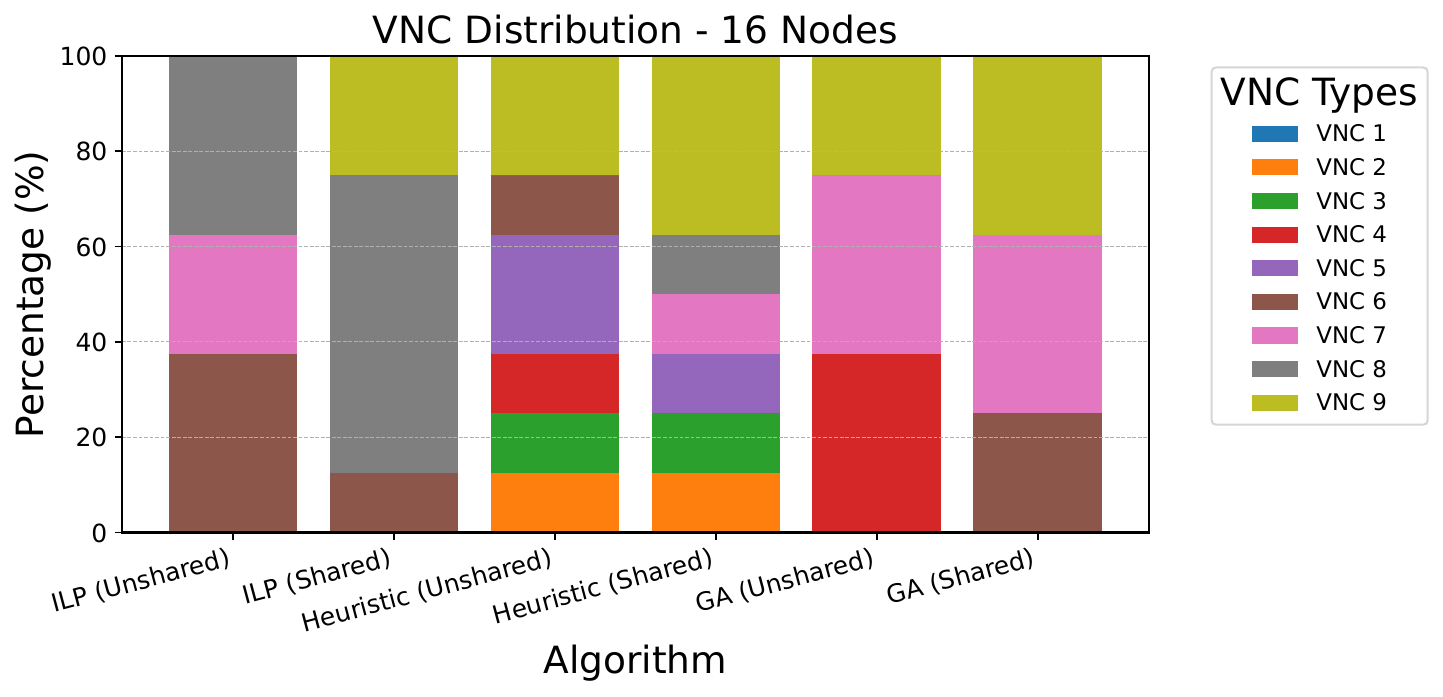}
        \caption{16-node topology}
        \label{fig:vnc-16}
    \end{subfigure}
    \hfill
    \begin{subfigure}[t]{0.48\textwidth}
        \centering
        \includegraphics[width=\textwidth]{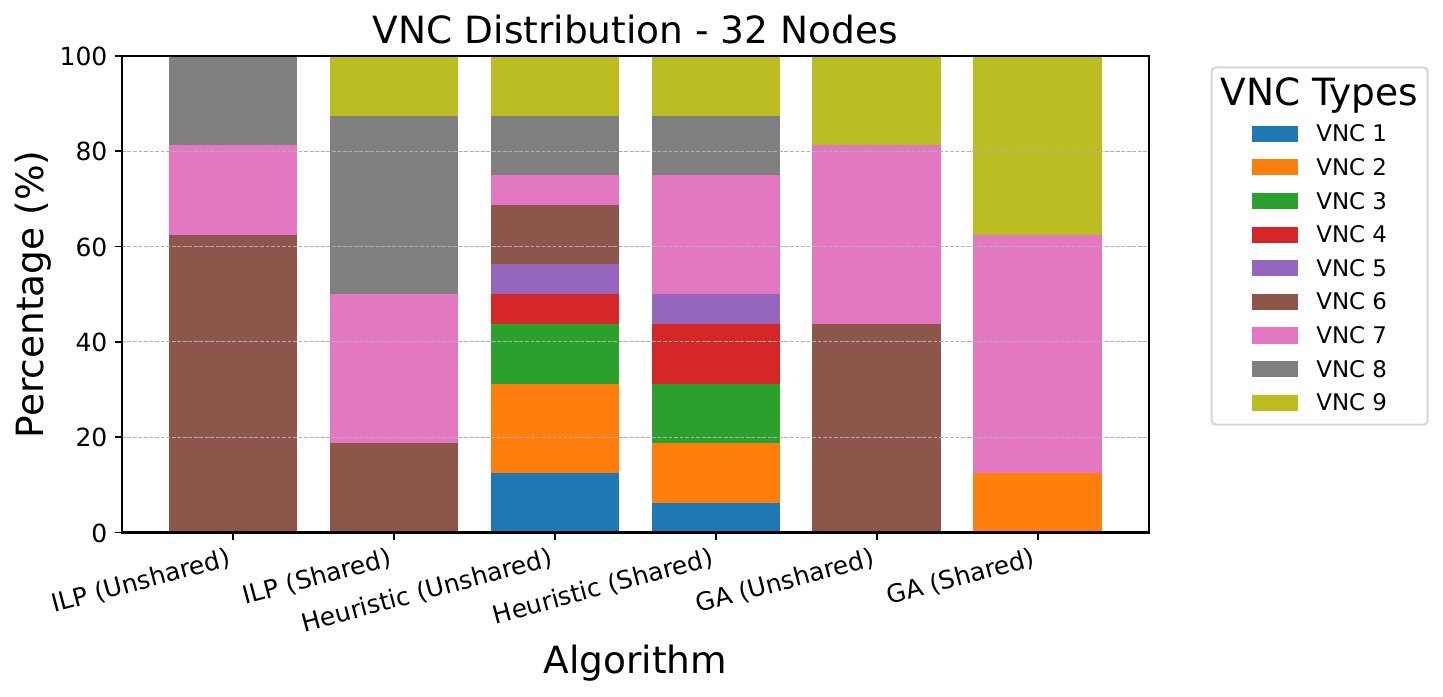}
        \caption{32-node topology}
        \label{fig:vnc-32}
    \end{subfigure}

    \vspace{0.3cm}

    \begin{subfigure}[t]{0.48\textwidth}
        \centering
        \includegraphics[width=\textwidth]{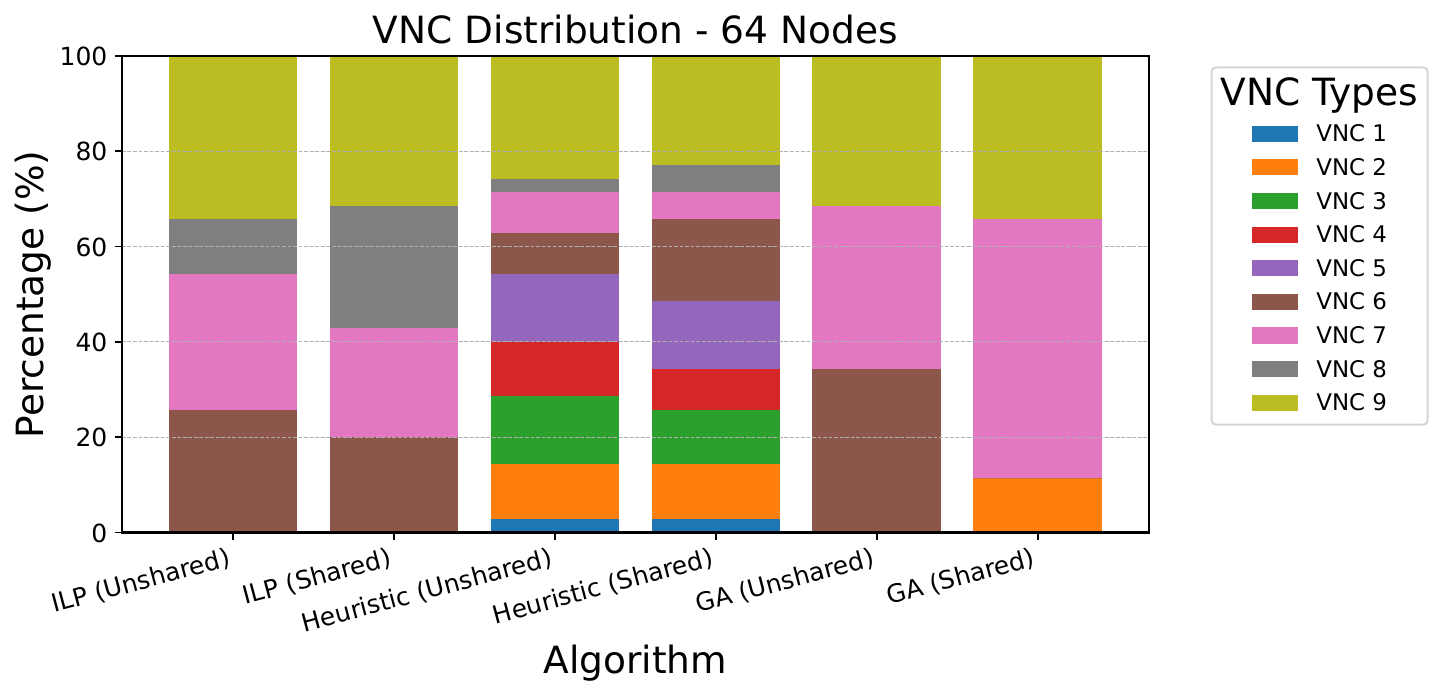}
        \caption{64-node topology}
        \label{fig:vnc-64}
    \end{subfigure}
    \hfill
    \begin{subfigure}[t]{0.48\textwidth}
        \centering
        \includegraphics[width=\textwidth]{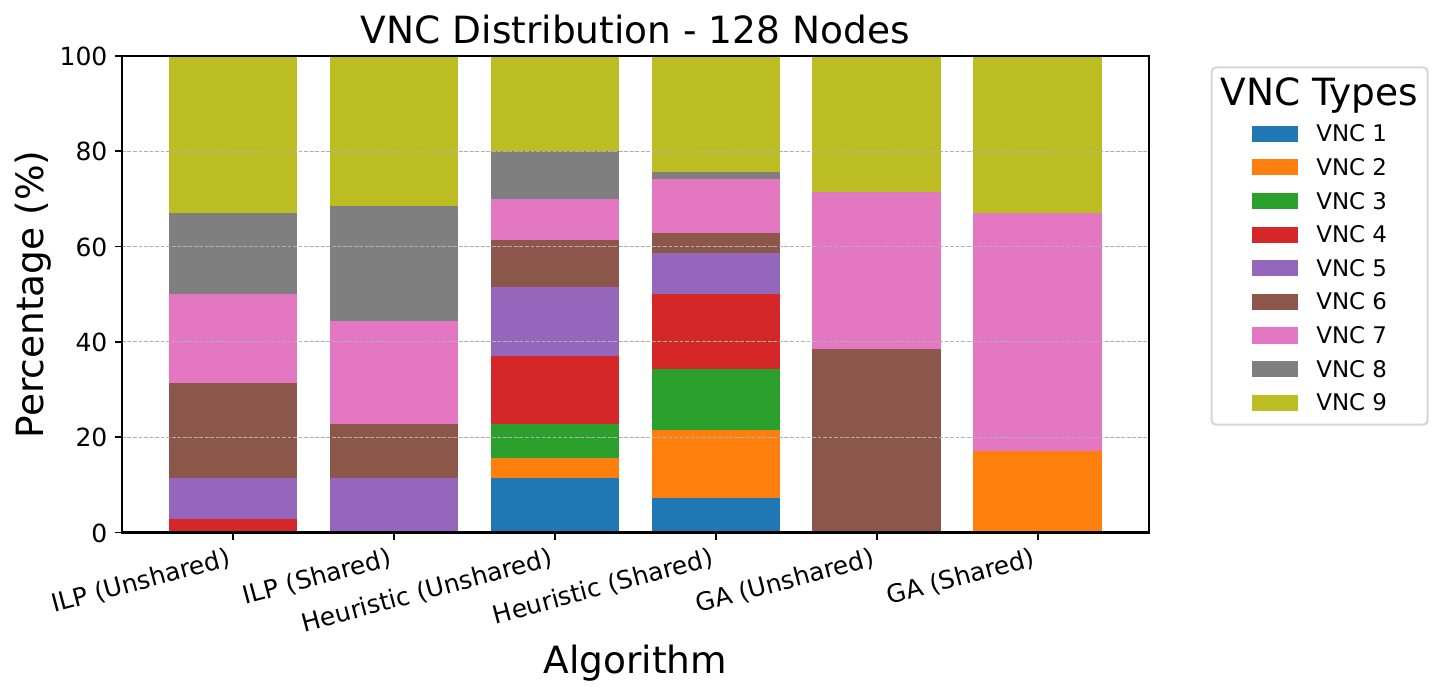}
        \caption{128-node topology}
        \label{fig:vnc-128}
    \end{subfigure}

    \caption{VNC distribution (\%) across shared and unshared variants of ILP, Heuristic, and GA algorithms for different topologies. Each bar is stacked with nine segments representing the percentage of slice requests mapped to the respective VNCs.}
    \label{fig:vnc-distribution-grid}
\end{figure*}

\subsubsection{\textbf{Profit}} We now report the profit because it is the \emph{primary objective} of our formulation: the ILP maximizes the service–provider margin, subject to capacity and delay constraints. Profit thus captures both \emph{admission/revenue} (which slices are accepted) and \emph{placement efficiency/cost}. Consistent with this objective, the total profit increases with topology size across all algorithms and variants, as larger networks admit more slice requests and exploit the pooling to amortize transport and compute costs. 

Figure \ref{fig:profit-slicetypes} illustrates these trends for total profit values for all the algorithms across all topology sizes, for both unshared and shared variants. We analyze the profits across four distinct slice distributions with (i) all slice requests as URLLC, (ii) all as eMBB,  (iii) all as mMTC, and (iv) a balanced distribution of one-third of each slice type, to examine how the distribution of slice types affects the overall profitability of our solutions across these configurations.  Among all the topologies and algorithms, the highest profit was generated by the URLLC-only configuration, followed by the balanced mix, all eMBB, and mMTC. 
The observations align with our cost model, where URLLC slice requests have been assigned the highest revenue amongst the other two request types, thus increasing the overall profit considerably for all URLLC type requests. For example, in the 128-node case, the ILP-shared version for this workload generates 98.3\% more profit than the all-eMBB version and almost four times more than the all-mMTC version.
Additionally, shared backup variants outperformed unshared ones in all configurations, with ILP shared and GA shared achieving higher total profits consistently. 
For example, in the 128-node all-URLLC scenario, the shared scheme yields 16.7\% higher profit for the ILP and 18.0\% for the GA compared to their unshared counterparts, demonstrating the cost-effectiveness of backup sharing.
By adopting a shared backup scheme, the system can admit more high-revenue slices under constrained bandwidth and compute resources. This pattern also holds in the equal slice distribution case, where ILP-shared and GA-shared consistently achieve higher profits.
These results establish the ability of our framework to adapt to heterogeneous request distributions and still allocate requests in alignment with the profitability objectives of the mobile network operators (MNOs). 



\begin{figure*}[htbp]
    \centering
    \begin{subfigure}[t]{0.48\textwidth}
        \includegraphics[width=\textwidth]{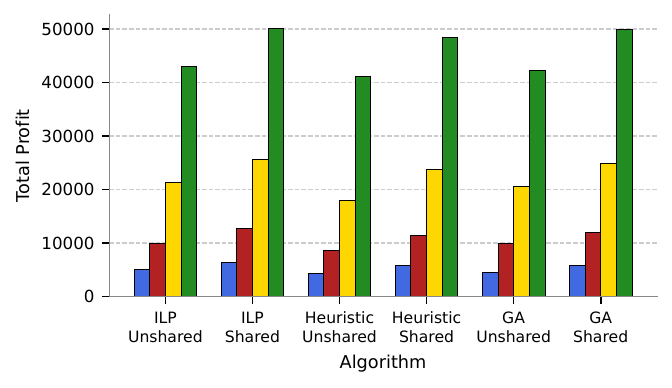}
        \caption{All uRLLC Requests}
    \end{subfigure}
    \hfill
    \begin{subfigure}[t]{0.48\textwidth}
        \includegraphics[width=\textwidth]{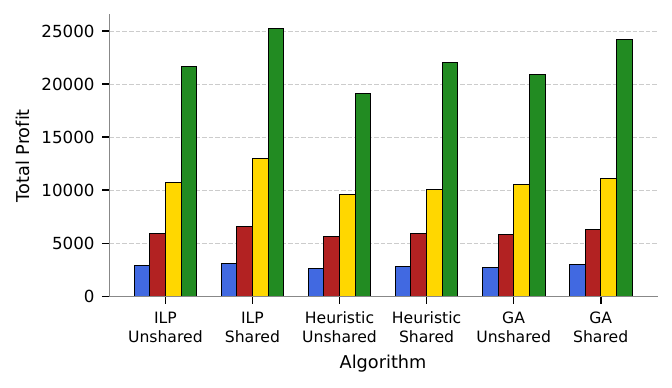}
        \caption{All eMBB Requests}
    \end{subfigure}

    \vspace{0.1cm}

    \begin{subfigure}[t]{0.48\textwidth}
        \includegraphics[width=\textwidth]{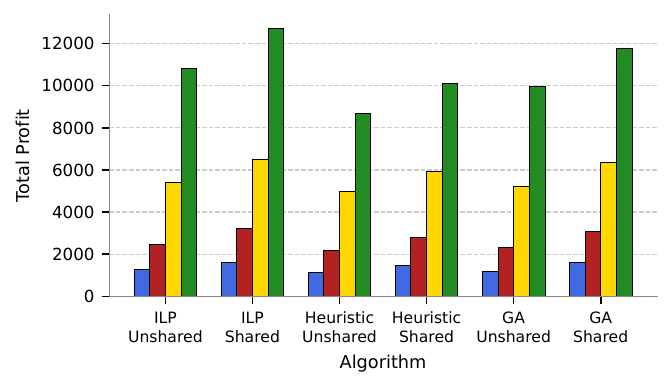}
        \caption{All mMTC Requests}
    \end{subfigure}
    \hfill
    \begin{subfigure}[t]{0.48\textwidth}
        \includegraphics[width=\textwidth]{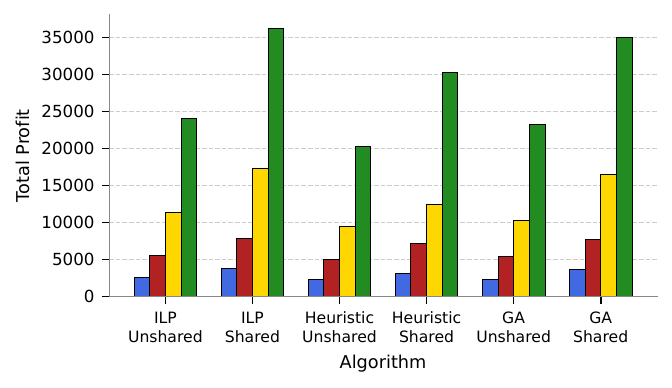}
        \caption{Equal Distribution of Slice Types}
    \end{subfigure}

    \vspace{0.15cm}
    \centering
  \textcolor{myblue}{\rule{0.35cm}{0.35cm}}~16-Nodes\quad
\textcolor{myred}{\rule{0.35cm}{0.35cm}}~32-Nodes\quad
\textcolor{mygold}{\rule{0.35cm}{0.35cm}}~64-Nodes\quad
\textcolor{mygreen}{\rule{0.35cm}{0.35cm}}~128-Nodes
    \caption{Total profit across ILP, Heuristic, and GA algorithms for 
             various request slice-type distributions over increasing 
             topology sizes.}
    \label{fig:profit-slicetypes}
\end{figure*}

\begin{figure*}[htbp]
    \centering
    \begin{subfigure}[t]{0.32\textwidth}
        \centering
        \includegraphics[width=\linewidth]{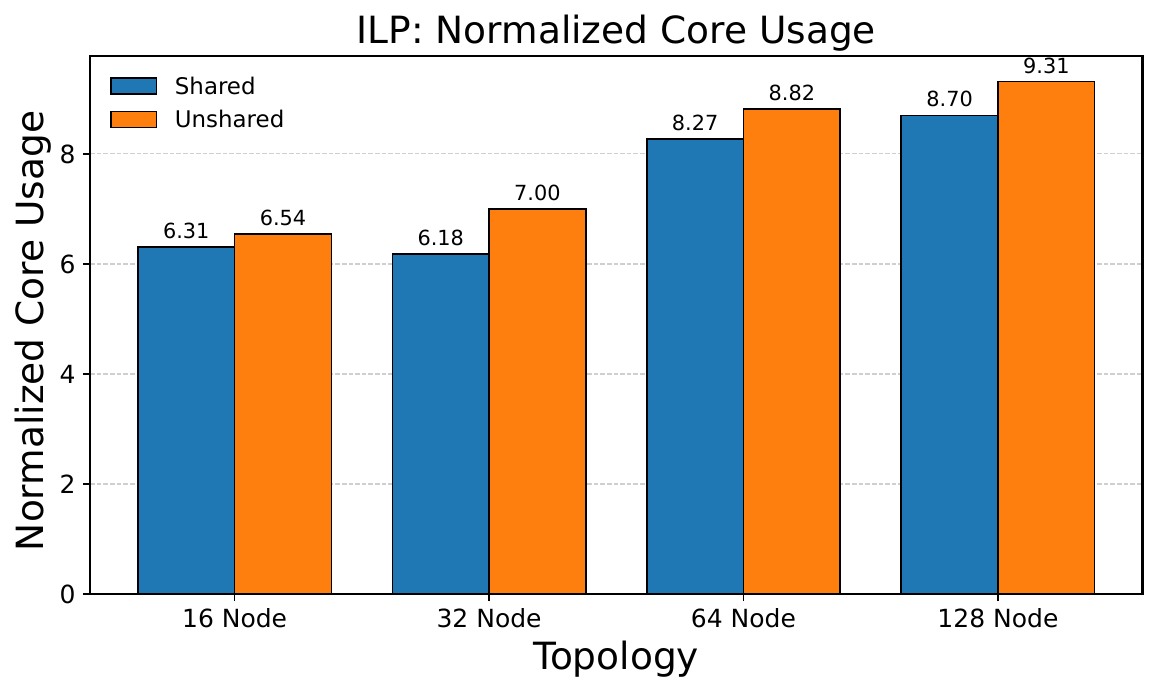}
        \caption{ILP Algorithm}
    \end{subfigure}
    \hfill
    \begin{subfigure}[t]{0.32\textwidth}
        \centering
        \includegraphics[width=\linewidth]{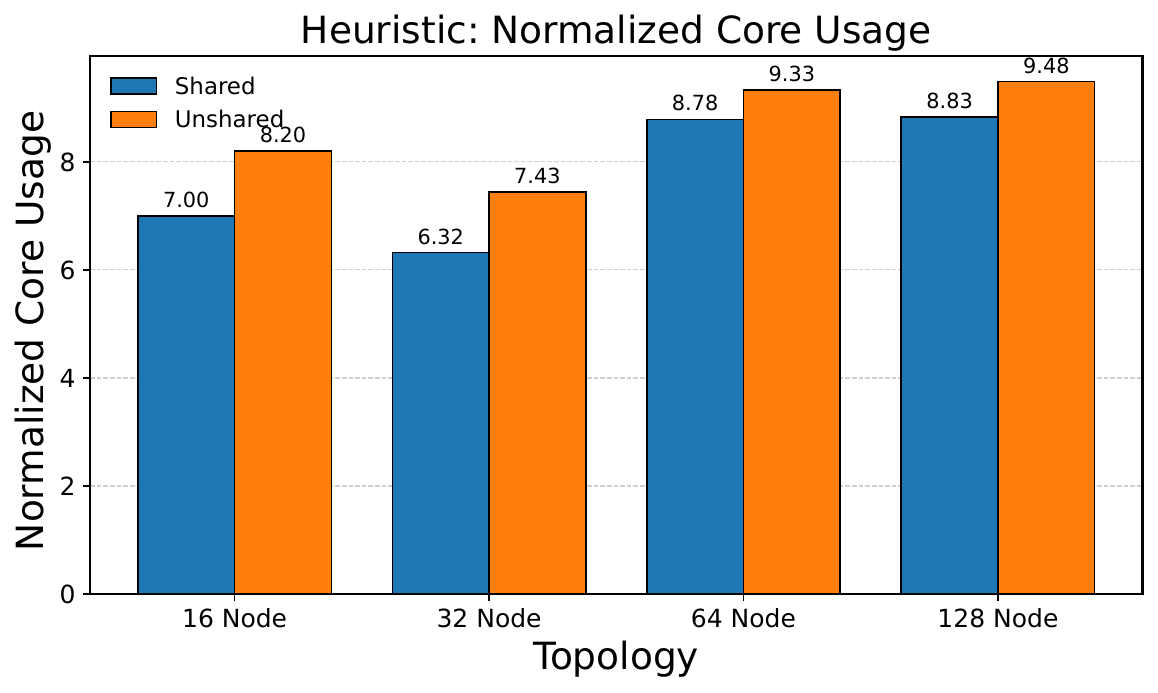}
        \caption{Heuristic Algorithm}
    \end{subfigure}
    \hfill
    \begin{subfigure}[t]{0.32\textwidth}
        \centering
        \includegraphics[width=\linewidth]{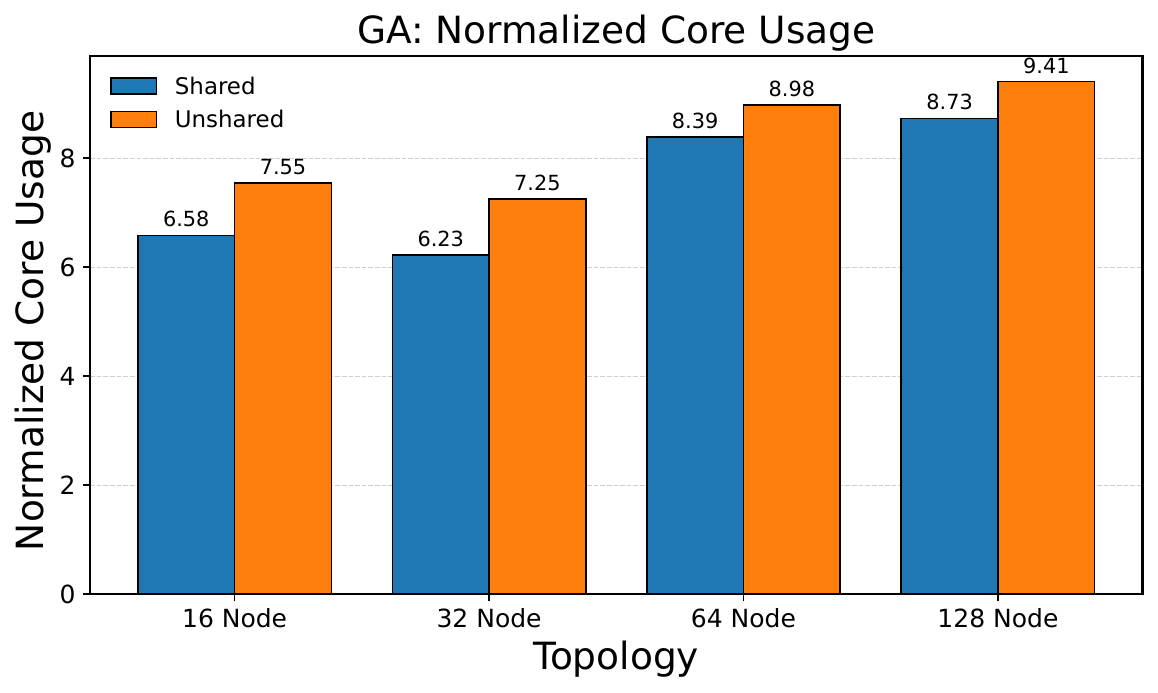}
        \caption{Genetic Algorithm}
    \end{subfigure}
    \caption{Normalized core usage across shared and unshared variants of ILP, Heuristic, and GA algorithms for increasing topology sizes.}
    \label{fig:cpu-utilization}
\end{figure*}

\subsubsection{\textbf{Normalized CPU (core) and link usage}}

Table \ref{tbl:core utilization} provides the core usage, indicating the required CPU core count to execute VNFs for each protocol layer, values derived from the OAI implementation \cite{Kobayashi2017OAI} \cite{chun2016performance}). Core usage is an important metric for evaluating the resource efficiency of all our approaches and their shared and unshared versions. We normalize the core usage based on the request acceptance for a fair comparison, as the number of accepted requests varies considerably among all the algorithms. Here, a lower value of the Normalized Core Usage (NCU) indicates more efficient use of computational resources.

As shown in Fig.~\ref{fig:cpu-utilization}, shared backup schemes consistently achieve a lower NCU than their unshared counterparts across all topologies and algorithms (for instance, for ILP, the NCU ranges from 6.31 for shared vs. 6.54 for unshared for 16 nodes, to 8.70 shared vs. 9.31 unshared for 128 nodes, similarly for others). This is because the shared schemes allow the sharing of common backup VNF instances among multiple slice requests with compatible VNCs (that share a common backup path). It thus avoids redundant VNF instantiations, thereby reducing the total cores consumed for a given number of accepted requests. Secondly, across increasing topology sizes, ILP exhibits the lowest NCU, followed by GA and then the heuristic. For instance, for the unshared version in 32-nodes, ILP has an NCU of 7, GA has 7.25, and the heuristic has 7.43, respectively, similarly across all topologies and the shared version. 

\begin{table}[h!]
\centering
\begin{tabular}{lc}
\hline
\rowcolor[HTML]{E6E6E6}
\textbf{RAN Protocol} & \textbf{CPU Usage (cores)} \\
\hline
RRC and PDCP        & 0.49 each \\
High RLC and Low RLC   & 0.0245 each\\
High MAC and Low MAC  & 0.343 each\\
High PHY   & 0.833 \\
Low PHY    & 2.352 \\
\hline
\textbf{Total} & \textbf{4.9} \\
\hline
\end{tabular}
\caption{RAN protocol CPU usage (in cores) (Adapted from PlaceRAN \cite{morais2022placeran}, based on data from \cite{Kobayashi2017OAI})}
\label{tbl:core utilization}
\end{table}

Fig. \ref{fig:link-utilization} shows the results for link usage (in percent), calculated as the ratio of total bandwidth consumed to their overall available link capacity, averaged across all active/used links in the topology. The following trends are observed here. Firstly, shared backup variants consistently incur higher link usage than their unshared counterparts, across all algorithms and topologies. For example, ILP records 63.83\%  link usage in the 32-node shared case versus 51.30\% in the unshared variant. While in the 64-node topology, the difference rises to 24.29 percentage points (74.18\% shared vs. 49.89\% unshared). This can be attributed to the fact that the shared version tends to select more centralized VNCs. Additionally, for more centralized VNCs, the bandwidth requirements are considerably higher, particularly for the fronthaul, as shown in Table \ref{tbl:vnc-parameters}. 

Secondly, we observe that ILP consistently shows the highest link usage (e.g., up to 78.01\% in the 16-node shared case) because it tries to assign more centralized VNCs to the requests, which allocate a larger portion of the RAN functions to CU/DU nodes, thereby naturally increasing total link usage. GA occupies the middle ground, with its metaheuristic trying to get near-optimal configurations, resulting in intermediate link usage values (i.e., 55.81\% in the 64-node shared case). The heuristic method records the lowest usage (as low as 40.89\% in the 64-node unshared case), since its locally optimal, greedy choices do not lead to aggressive centralization and keep more VNFs closer to RUs, reducing bandwidth demand but also missing potential pooling gains.

Lastly, with the increase in the topology size, link usage follows a non-monotonic pattern.  It is high at 16 nodes due to the limited number of links, which constricts routing path alternatives and leads to higher congestion on available optical paths (i.e., for eg, ILP-shared: 78.0\%, GA-shared: 66.4\%, Heuristic-shared: 62.5\% for the 16-node case). There is a reduction at 32 nodes when more links and alternative paths become available, momentarily reducing congestion. It rises again at 64 nodes because the increase in the number of accepted slice requests, combined with the selection of more centralized VNCs, collectively increases total bandwidth demand. At 128 nodes, the link usage begins to stabilize as additional optical capacity compensates for the increased number of active flows (i.e., ILP-shared: 71.9\%, GA-shared: 64.1\%, Heuristic-shared: 58.9\%). This fluctuating trend can be attributed to the interaction between topology size (due to routing path availability), number of requests accepted, and the centralization level of the selected VNCs.

\begin{figure*}[htbp]
    \centering
    \begin{subfigure}[t]{0.32\textwidth}
        \centering
        \includegraphics[width=\linewidth]{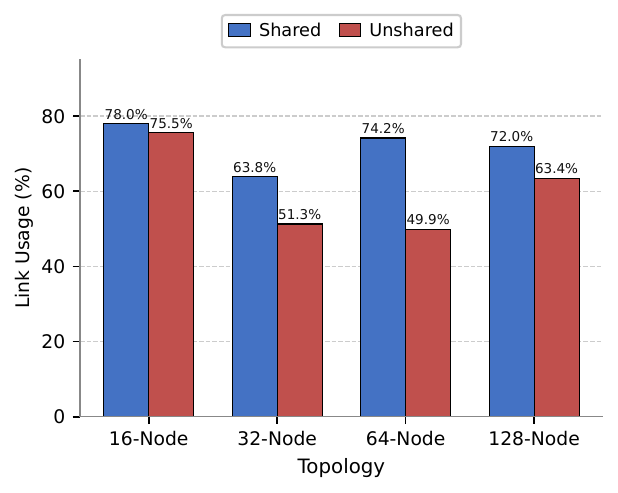}
        \caption{ILP Algorithm}
    \end{subfigure}
    \hfill
    \begin{subfigure}[t]{0.32\textwidth}
        \centering
        \includegraphics[width=\linewidth]{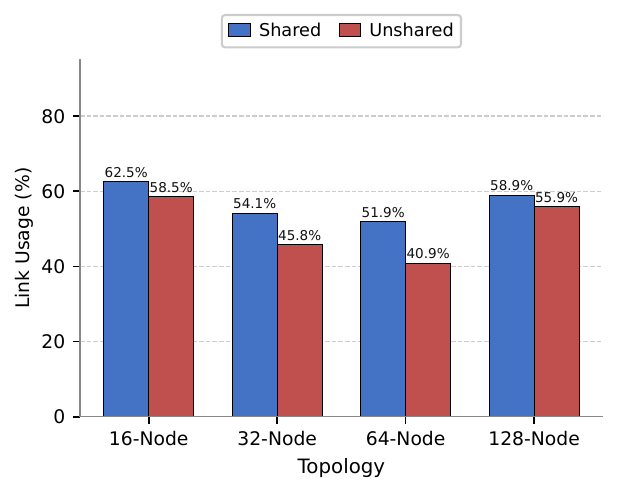}
        \caption{Heuristic Algorithm}
    \end{subfigure}
    \hfill
    \begin{subfigure}[t]{0.32\textwidth}
        \centering
        \includegraphics[width=\linewidth]{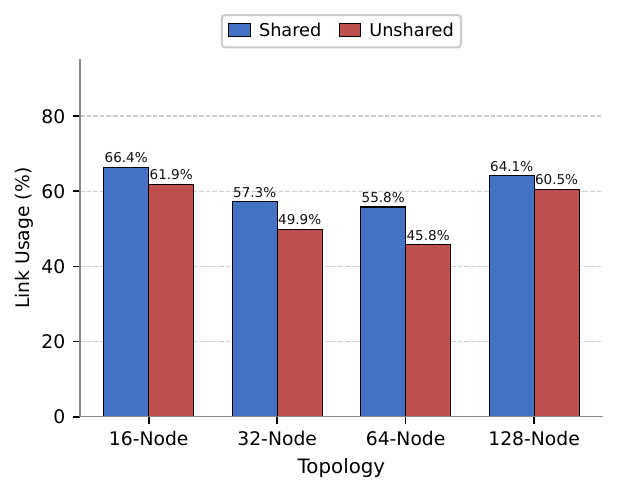}
        \caption{Genetic Algorithm}
    \end{subfigure}

    \caption{Link Usage (\%) across shared and unshared variants of ILP, Heuristic, and GA algorithms over different topologies.}
    \label{fig:link-utilization}
\end{figure*}

\subsubsection{\textbf{Genetic algorithm micro-benchmarking}}
We assess GA sensitivity to \emph{population size} using total profit (Fig.~\ref{fig:population size vs profit}). Across all topologies, profit exhibits clear \emph{diminishing returns} as population increases. For example, in the 128-node \emph{shared} variant, profit rises from 13,502.6 at population~10 to $\approx$20,569 at population~100 (+52.3\%), but improves only to 21,309.5 at population~250 (+3.5\% over 100). Similar curves appear in smaller topologies: the bulk of gains occur for 10--100 individuals, with marginal improvements beyond $\sim$150, indicating practical convergence and justifying population sizes near 150 as a cost-effective choice.

\begin{figure}[htbp]
\centering
	\includegraphics[width=8.5cm, height=5.5cm]{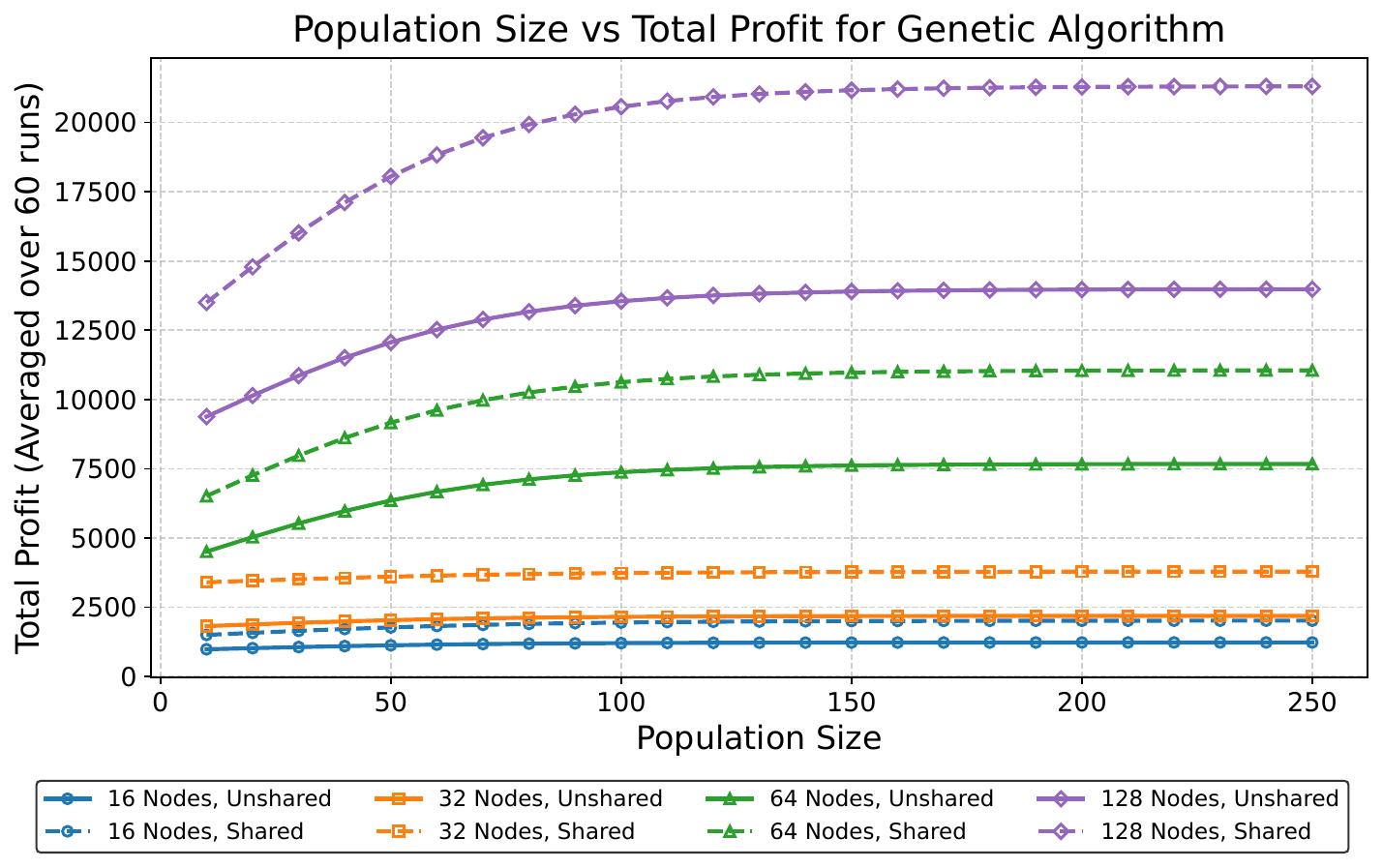}
	\caption{Population size vs total profit for different topologies in shared and unshared versions of the Genetic Algorithm}
	\label{fig:population size vs profit}
\end{figure}


Fig.~\ref{fig:ga-profit-vs-generations} shows the effect of GA \emph{generations} on total profit for both shared and unshared variants across all topologies. The curves consistently converge: in the 128-node (shared) case, profit increases from $\sim$14,000 at 10 generations to $\sim$20,360 at 40 generations (+45.4\%), but only reaches 21,304 at 100 generations. Across topologies, the slope flattens after 60--70 generations, indicating saturation of the search. Based on these results, we set GA to 40 generations with a population size of 100 for the final experiments, balancing computational cost and solution quality. 
Higher parameter settings provide only marginal performance improvements while incurring disproportionately higher runtime, as the profit curves approach saturation. For example, in the 128-node (shared) scenario, increasing the population size from 100 to 250 and the number of generations from 40 to 100 results in only 3.5\% and 4.6\% additional profit, respectively. This indicates that the selected GA configuration achieves more than 95\% of the best observed profit while requiring substantially fewer evaluations.

\begin{figure}[htbp]
    \centering
    \includegraphics[width=9cm, height=5.5cm]{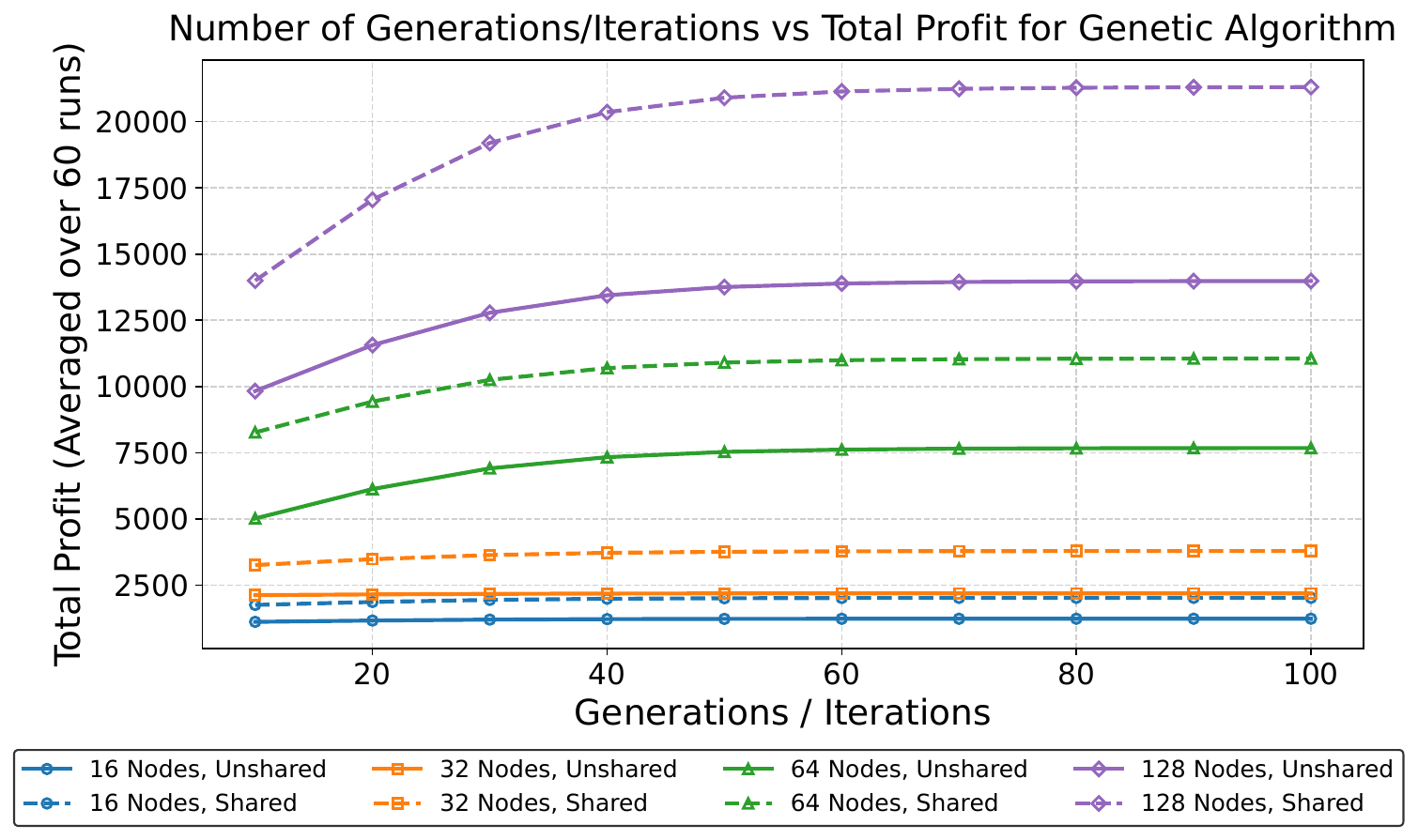}
    \caption{Evolution of total profit over generations for different topology sizes and GA variants (Shared vs. Unshared). Profit values are averaged over 60 runs.}
    \label{fig:ga-profit-vs-generations}
\end{figure}
\subsubsection{\textbf{Wavelength channels}}

We assess optical efficiency via the number of unique wavelengths ($\lambda$) allocated under DWDM constraints (Fig.~\ref{fig:wavelength-channels}). As topology size increases (16$\rightarrow$128 nodes), wavelength usage grows steadily, reflecting greater routing diversity and longer paths. The unshared heuristic consistently consumes the most spectrum, peaking at 36 channels on the 128-node topology. This spectral fragmentation is because locally greedy choices miss opportunities for path/wavelength reuse.
By contrast, ILP–shared and GA–shared variants use substantially fewer wavelengths. 
Two factors drive this: (i) in the shared-backup scheme, multiple RUs with the same VNC can reuse backup paths, which reduces wavelength channels, and (ii) the objective includes a wavelength-activation penalty, $C_{\lambda}$, which encourages reuse of already activated wavelengths and discourages new activations. 
Overall, ILP and GA with shared backup strike a better balance between spectrum efficiency and computational cost, whereas the heuristic trades speed for higher wavelength consumption.


\begin{figure}[htbp]
    \centering
    \includegraphics[width=0.5\textwidth]{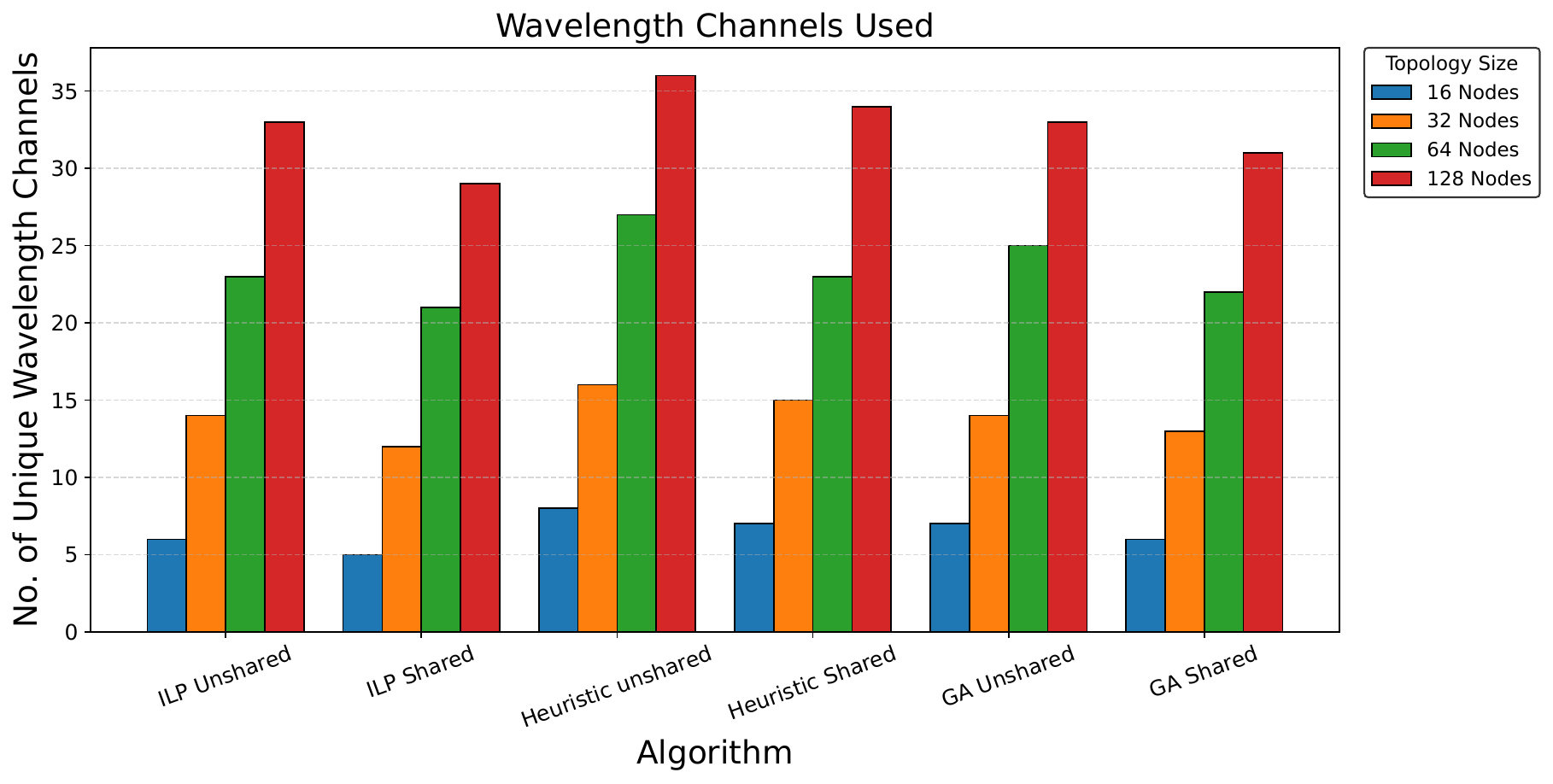}
    \caption{Comparison of number of unique wavelength channels used by different algorithms across topology sizes.}
    \label{fig:wavelength-channels}
\end{figure}

\subsubsection{\textbf{Execution time}} The results for the execution times of different algorithms highlight drastic differences between the exact (ILP) and the approximate (Heuristic and GA) approaches across all topologies, as shown in Fig. \ref{fig:execution-time}. 
The ILP implementations show exponential growth in execution times as the network size increases. The execution times for ILP approaches range from 9.5 seconds for the 16-node topology to over 87 minutes (5208.2 seconds) for the 128-node topology. 
\begin{figure}[htbp]
    \centering
    \includegraphics[width=0.5\textwidth]{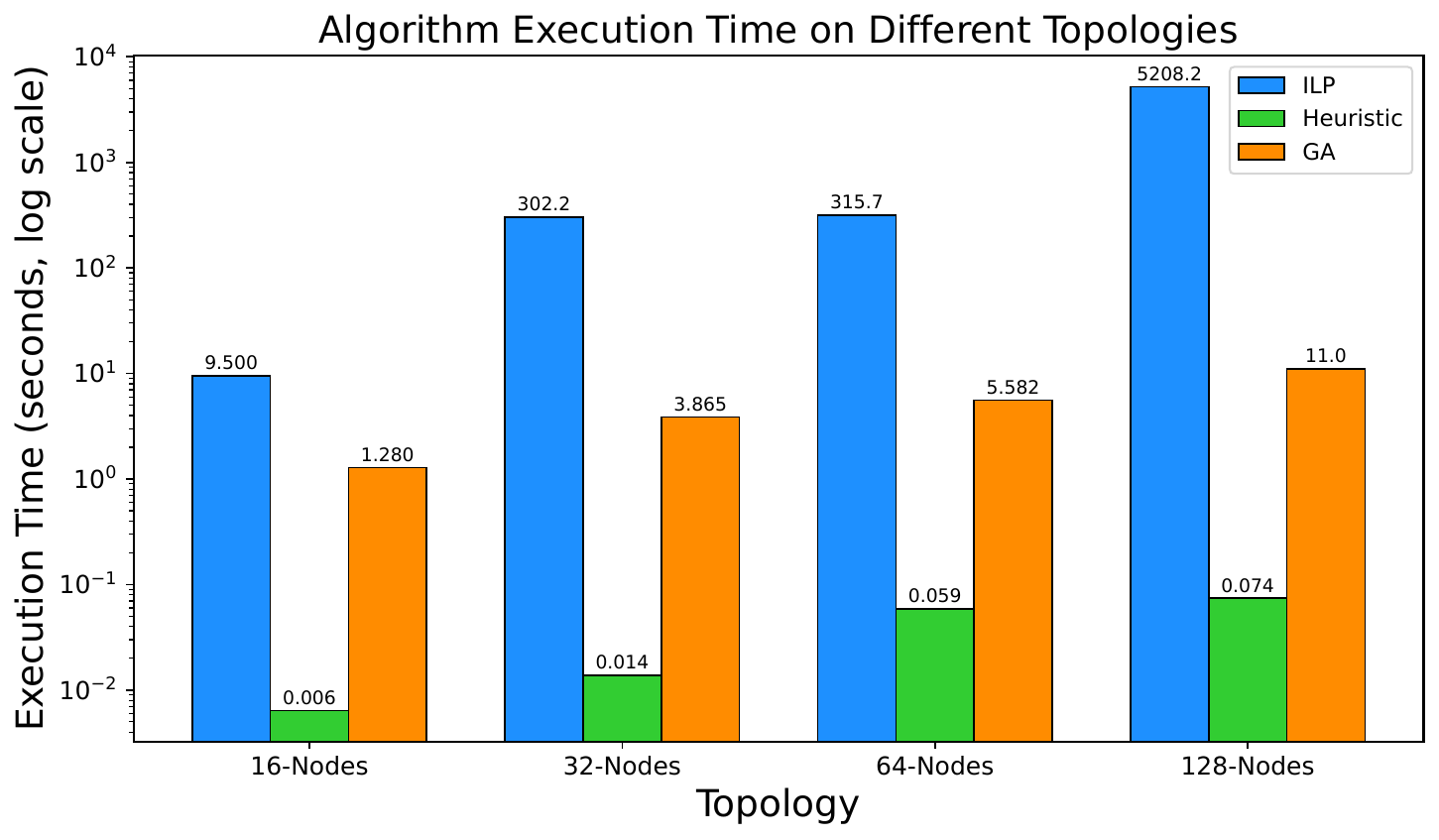}
    \caption{Execution time (in seconds, log scale) of different algorithms across increasing topology sizes. }
    \label{fig:execution-time}
\end{figure}
On the contrary, the heuristic approach show remarkable efficiency in terms of computational time, with execution times in milliseconds even for the larger topologies. For instance, it ran for just 74 milliseconds for even 128-nodes topology, which is still orders of magnitude faster than the ILP approaches. Notably, the GA implementations offer an intermediate solution with execution times ranging from 1.2 seconds for the smaller topologies to around 11 seconds for the 128-node topology. This illustrates a tradeoff between the computational complexity and the solution quality: while the ILP derives an optimal solution, it is at the expense of high computational times and is not scalable with the network size. Thus, their practical applicability is limited to smaller networks or offline planning scenarios. On the other hand, the heuristic and GA approaches, in particular, the GA implementations, emerge as a viable alternative for real-time and near real-time network optimization in larger topologies, where a slight reduction in the solution quality will be more than compensated by their considerable computational efficiency.


\section{Conclusion}
This paper addressed the complex challenge of optimizing virtual radio function placement in disaggregated 5G vRANs while jointly considering optical transport constraints. By integrating slice awareness, reliability, and optical-layer provisioning into a unified framework, our approach bridges the gap between radio and transport resource orchestration. The results demonstrate that coordinated optimization across these domains can significantly enhance profit, resource utilization, and service availability for mobile network operators. Moreover, the proposed shared backup strategy and metaheuristic solutions provide a practical pathway toward scalable and fault-tolerant vRAN deployments. Beyond immediate gains in efficiency and resiliency, this work underscores the importance of cross-layer intelligence for future network automation. Building on these insights, future research will focus on adaptive, AI-driven orchestration and real-time reconfiguration mechanisms tailored to Open RAN and dynamic traffic scenarios, ultimately paving the way for more autonomous and resilient 6G-ready infrastructures.
\ifCLASSOPTIONcaptionsoff
  \newpage
\fi



%
\bibliographystyle{IEEEtran}
\bibliography{references}

%









\end{document}